\newenvironment{breakablealgorithm}
   {%
   \begin{center}
    \refstepcounter{algorithm}%
    \hrule height.8pt depth0pt \kern2pt%
    \renewcommand{\caption}[2][\relax]{%
     {\raggedright\textbf{\ALG@name~\thealgorithm} ##2\par}%
     \ifx\relax##1\relax %
      \addcontentsline{loa}{algorithm}{\protect\numberline{\thealgorithm}##2}%
     \else %
      \addcontentsline{loa}{algorithm}{\protect\numberline{\thealgorithm}##1}%
     \fi
     \kern2pt\hrule\kern2pt
    }
   }{%
    \kern2pt\hrule\relax%
   \end{center}
   }
\newtheorem{theorem}{Theorem}[section]
\newtheorem{lemma}{Lemma}[section]
\title{\textbf{Gathering with a strong team in weakly Byzantine environments}}
\author[1]{Jion Hirose}
\author[2]{Junya Nakamura}
\author[1]{Fukuhito Ooshita}
\author[1]{Michiko Inoue}
\affil[1]{Nara Institute of Science and Technology}
\affil[2]{Toyohashi University of Technology}
\date{July 31, 2020}
\begin{document}

\maketitle

\begin{abstract}
We study the gathering problem requiring a team of mobile agents to gather at a single node in arbitrary networks. The team consists of $k$ agents with unique identifiers (IDs), and $f$ of them are weakly Byzantine agents, which behave arbitrarily except falsifying their identifiers. The agents move in synchronous rounds and cannot leave any information on nodes. If the number of nodes $n$ is given to agents, the existing fastest algorithm tolerates any number of weakly Byzantine agents and achieves gathering with simultaneous termination in $O(n^4\cdot|\Lambda_{good}|\cdot X(n))$ rounds, where $|\Lambda_{good}|$ is the length of the maximum ID of non-Byzantine agents and $X(n)$ is the number of rounds required to explore any network composed of $n$ nodes. In this paper, we ask the question of whether we can reduce the time complexity if we have a strong team, i.e., a team with a few Byzantine agents, because not so many agents are subject to faults in practice. We give a positive answer to this question by proposing two algorithms in the case where at least $4f^2+9f+4$ agents exist. Both the algorithms take the upper bound $N$ of $n$ as input. The first algorithm achieves gathering with non-simultaneous termination in $O((f+|\Lambda_{good}|)\cdot X(N))$ rounds. The second algorithm achieves gathering with simultaneous termination in $O((f+|\Lambda_{all}|)\cdot X(N))$ rounds, where $|\Lambda_{all}|$ is the length of the maximum ID of all agents. The second algorithm significantly reduces the time complexity compared to the existing one if $n$ is given to agents and $|\Lambda_{all}|=O(|\Lambda_{good}|)$ holds.
\end{abstract}

\section{Introduction}
\subsection{Background}
Mobile agents (in short, agents) are software programs that move autonomously and perform various tasks in a distributed system.
A task that collects multiple agents on the same node is called a \emph{gathering}, and this task has been widely studied from the theoretical aspect of distributed systems~\cite{Pelc2019book}. 
By accomplishing this task, the agents can exchange information with each other more efficiently, and it becomes easy to carry out future cooperative behaviors.

In operations of large-scale distributed systems, we cannot avoid facing faults of agents.
Among them, Byzantine faults are known to be the worst faults because Byzantine faults do not make any assumption about the behavior of faulty agents (called \emph{Byzantine agents}).
For example, Byzantine agents can stop and move at any time apart from their algorithm, and tell arbitrary wrong information to other agents.

In this study, we consider the deterministic gathering problem with Byzantine agents and propose two synchronous gathering algorithms for the problem.

\subsection{Related works}
\label{sec:related-work}
\begin{table*}
  \caption{A summary of synchronous Byzantine gathering algorithms with unique IDs.
  Here, $n$ is the number of nodes, $N$ is the upper bound of $n$, $|\lambda_{good}|$ is the length of the smallest ID among non-Byzantine agents, $|\Lambda_{good}|$ is the length of the largest ID among non-Byzantine agents, $|\Lambda_{all}|$ is the length of the largest ID among agents, $k$ is the number of agents, and $f$ is the number of Byzantine agents.}
  \label{tab_related_works}
  \scalebox{0.8}{
  \begin{tabular}{ccccccc}
    \hline
    &Input&Byzantine&\begin{tabular}{c}Condition of \\\#Byzantine agents\end{tabular}&\begin{tabular}{c}Simultaneous\\termination\end{tabular}&Time complexity\\
    \hline \hline
    \cite{Dieudonne2014} & $n$ & Weak & $f+1\leq k$ & Possible & $O(n^4\cdot |\Lambda_{good}|\cdot X(n))$\\
    \cite{Dieudonne2014} & $f$ & Weak & $2f+2\leq k$ & Possible & Poly.~of $n$ \& $|\Lambda_{good}|$\\
    \cite{Bouchard2016} & $n,f$ & Strong & $2f+1\leq k$ & Possible & Exp.~of $n$ \& $|\Lambda_{good}|$\\
    \cite{Bouchard2016} & $f$ & Strong & $2f+2\leq k$ & Possible & Exp.~of $n$ \& $|\Lambda_{good}|$\\
    \cite{Bouchard2018} & $\lceil \log\log n \rceil$ & Strong & $5f^2+7f+2\leq k$ & Possible & Poly.~of $n$ \& $|\lambda_{good}|$\\ 
    \begin{tabular}{c}Proposed algorithm 1\end{tabular} & $N$ & Weak & $4f^2+9f+4\leq k$ & Impossible & $O((f+|\Lambda_{good}|)\cdot X(N))$\\
    \begin{tabular}{c}Proposed algorithm 2\end{tabular} & $N$ & Weak & $4f^2+9f+4\leq k$ & Possible & $O((f+|\Lambda _{all}|)\cdot X(N))$\\
    \hline
  \end{tabular}
  }
\end{table*}
The gathering problem has been studied for the first time by Schelling~\cite{Schelling1960}.
In that paper, the author studied the gathering problem of exactly two agents, called the rendezvous problem.
After that, the rendezvous problem and its generalization, the gathering problem, have been widely studied in various environments that combine agent synchronization, anonymity, presence/absence of memory on a node (called whiteboard), presence/absence of randomization, and topology, etc.~\cite{Pelc2019book}.
The purpose of these studies is to clarify the solvability of the gathering problem and its costs (e.g., time, the number of moves, and memory space, etc.) if solvable.
The rest of this section describes the deterministic gathering problem in arbitrary networks, on which we focus in this paper.

Many of the papers dealing with the rendezvous problem assume that agents move synchronously in a network and that agents cannot leave any information on nodes, that is, whiteboards do not exist~\cite{Pelc2019book}.
These works have studied the feasibility of the rendezvous and, if feasible, the time required to accomplish the task.
If agents are anonymous (i.e., do not have IDs), the deterministic rendezvous cannot be achieved in some symmetric graphs because the symmetry cannot be broken.
In the literature~\cite{Dessmark2006,Kowalski2008,Ta-shma2007,Miller2016}, rendezvous algorithms have been proposed in any graph by assuming a unique ID for each agent.
Dessmark et al.~\cite{Dessmark2006} have proposed an algorithm to achieve the rendezvous in polynomial time of $n$, $\lambda$, and $\tau$, where $n$ is the number of nodes, $\lambda$ is the smallest ID among agents, and $\tau$ is the difference between the startup times of agents.
Kowalski et al.~\cite{Kowalski2008} and Ta{-}shma et al.~\cite{Ta-shma2007} have improved the time complexity and have proposed algorithms to achieve the rendezvous in time independent of $\tau$.
In addition, Millar et al.~\cite{Miller2016} have analyzed the trade-off between the time required for rendezvous and the number of moves.
On the other hand, some papers \cite{Fraigniaud2008,Fraigniaud2013,Czyzowicz2012htmwyf} have investigated the memory space, the time, and the number of moves required to achieve the deterministic rendezvous without assuming a unique ID of each agent.
Since the rendezvous cannot be accomplished for some initial arrangements of agents and graphs, they have proposed algorithms for limited graphs and initial arrangements.
Fraigniaud et al.~\cite{Fraigniaud2008,Fraigniaud2013} have proposed algorithms for trees, and Czyzowicz et al.~\cite{Czyzowicz2012htmwyf} have proposed an algorithm for arbitrary graphs when initial arrangements of agents are not symmetric.

While many papers deal with the rendezvous problem in synchronous environments, some papers assume asynchronous environments where agents move at different constant speeds or move asynchronously.
In the latter case, speeds of agents in each time are always determined by the adversary.
For more details, please refer to the literature~\cite{Marco2006,Guilbault2013,Dieudonne2015,Kranakis2017} for a finite graph and the literature~\cite{Czyzowicz2012htmae,Bampas2010,Collins2010} for an infinite graph.

Recently some papers~\cite{Dieudonne2014,Bouchard2016,Tsuchida2018btgoma,Tsuchida2018gomaiabe,Bouchard2018} have studied the gathering problem in the presence of Byzantine agents.
Table~\ref{tab_related_works} shows this research and the related researches that are closest to this research.
These studies assume agents with unique IDs and consider two types of Byzantine agents depending on whether they can falsify their own IDs.
\emph{Weakly Byzantine agents} perform arbitrary behaviors except falsifying their own IDs, and \emph{strongly Byzantine agents} perform arbitrary behaviors, including falsifying their own IDs.

Dieudonn{\'{e}} et al.~\cite{Dieudonne2014} have studied the gathering problem in synchronous environments
where $k$ agents exist in a $n$-node arbitrary network and $f$ of them are Byzantine.
For weakly Byzantine agents, if $n$ is given to agents, the gathering algorithm with the time complexity of $O(n^4\cdot |\Lambda_{good}|\cdot X(n))$ has been proposed, where $|\Lambda_{good}|$ is the length of the largest ID among non-Byzantine agents and $X(n)$ is the number of rounds required to explore any network composed of $n$ nodes, while, if $f$ is given to agents, the gathering algorithm with the time complexity that is polynomial of $n$ and $|\Lambda_{good}|$ has been proposed.
The numbers of non-Byzantine agents required for the gathering algorithms are at least one and $f+2$, respectively, and the numbers are proven to be tight.
On the other hand, for strongly Byzantine agents, in the cases where $n$ and $f$ are given to agents and $f$ is given to agents, the gathering algorithms whose time complexities are exponential of $n$ and $|\Lambda_{good}|$ have been proposed.
The numbers of non-Byzantine agents required for the gathering algorithms are at least $2f+1$ and $4f+2$, respectively, while the numbers of non-Byzantine agents required to solve the gathering problems under these conditions are $f+1$ and $f+2$, respectively.
Bouchard et al. \cite{Bouchard2016} have proposed the algorithms that show tight results for the number of non-Byzantine agents required to solve the gathering problem for both cases in the presence of strongly Byzantine agents.
That is, the numbers of non-Byzantine agents required for the algorithms are at least $f+1$ and $f+2$, respectively.
However, the time complexities of the algorithms are still exponential of $n$ and $|\Lambda_{good}|$.
Bouchard et al.~\cite{Bouchard2018} have proposed the gathering algorithm with the time complexity that is polynomial time for the first time in presence of strongly Byzantine agents in synchronous environments.
The gathering algorithm operates under the assumption that $\lceil \log \log n \rceil$ is given to agents and at least $5f^2+6f+2$ non-Byzantine agents exist in the network.

Tsuchida et al.~\cite{Tsuchida2018btgoma} have studied the gathering algorithm in synchronous environments with weakly Byzantine agents under the assumption that each node is equipped with an authenticated whiteboard, where each agent can leave information on its dedicated area but every agent can read all information.
If the upper bound $F$ of $f$ is given to agents, the gathering algorithm with the time complexity of $O(Fm)$ has been proposed, where $m$ is the number of edges.
Tsuchida et al.~\cite{Tsuchida2018gomaiabe} have proposed the gathering algorithms in asynchronous environments in the presence of weakly Byzantine agents under the same assumption of authenticated whiteboards.

\subsection{Our contributions}
\label{sec:our-results}
We seek an algorithm that achieves the gathering with small time complexity in synchronous environments with weakly Byzantine agents.
When agents cannot leave any information on nodes, the existing fastest algorithm is the one proposed by Dieudonn{\'{e}} et al.~\cite{Dieudonne2014}.
The algorithm tolerates any number of weakly Byzantine agents, achieves the gathering \emph{with simultaneous termination},
and its time complexity is $O(n^4\cdot |\Lambda_{good}|\cdot X(n))$, where $n$ is the number of nodes, $|\Lambda_{good}|$ is the length of the largest ID among non-Byzantine agents, and $X(n)$ is the number of rounds required to explore any network composed of $n$ nodes.
When agents can use authenticated whiteboards on nodes,
Tsuchida et al.~\cite{Tsuchida2018btgoma} have proposed the algorithm that is faster than that of Dieudonn{\'{e}} et al.~\cite{Dieudonne2014}.
However, the assumptions of authenticated whiteboards are strong and greatly restrict the behavior of Byzantine agents.

In this paper, we try to reduce the time complexity by taking advantage of a \emph{strong team}, that is, a team with a few Byzantine agents.
Since not so many agents are subject to faults in practice, the assumption of a strong team is reasonable.
We propose two gathering algorithms that tolerate $f$ weakly Byzantine agents in the case where a strong team composed of at least $4f^2+9f+4$ agents exist (see Table \ref{tab_related_works}).
Both the algorithms take the upper bound $N$ of $n$ as input.
The first algorithm achieves the gathering \emph{with non-simultaneous termination} and its time complexity is $O((f+|\Lambda_{good}|)\cdot X(N))$, where $|\Lambda_{good}|$ is the length of the maximum ID of non-Byzantine agents.
The second algorithm achieves the gathering \emph{with simultaneous termination} and its time complexity is $O((f+|\Lambda_{all}|)\cdot X(N))$, where $|\Lambda_{all}|$ is the length of the maximum ID of all agents.
If $n$ is given to agents, the second algorithm significantly reduces the time complexity compared to that of Dieudonn{\'{e}} et al.~in case of $|\Lambda_{all}| = O(|\Lambda_{good}|)$.

\section{Preliminaries}
\subsection{Distributed systems}
A distributed system is modeled by a connected undirected graph $G=(V,E)$, where $V$ is a set of $n$ nodes, and $E$ is a set of edges.
If an edge $\{u,v\} \in E$ exists between the nodes $u,v\in V$, $u$ and $v$ are said to be adjacent.
A set of adjacent nodes of node $v$ is denoted by $N_v=\{u\mid\{v,u\}\in E\}$.
The degree of node $v$ is defined as $d(v)=|N_v|$.
Each edge connected to node $v$ is locally and uniquely labeled by function $P_v:\{\{v,u\} \mid u\in N_v\}\rightarrow\{1,2,...,d(v)\}$ that satisfies $P_v(\{v,u\})\neq P_v(\{v,w\})$ for edges $\{v,u\}$ and $\{v,w\}$ $(u\neq w)$.
$P_v(v,u)$ is called the port number of an edge $\{v,u\}$ on node $v$.
Any node has neither ID nor memory.
Time is discretized, and each discretized time is called a round.

\subsection{Mobile agents}
There are $k$ agents $a_1, a_2, ..., a_k$ in the system.
All agents cannot mark visited nodes or traversed edges in any way.
Each agent $a_i$ has a unique ID denoted by $a_i.ID\in \mathbb{N}$, but does not know a priori the IDs of other agents.
Also, agents know the upper bound $N$ of the number of nodes, but they do not know $k$, the topology of the graph, or $n$.
The amount of agent memory is unlimited, and the contents of memory are not changed during a move through an edge.

The adversary wakes up at least one agent at the first round.
We call an agent that did not start at the first round dormant.
A dormant agent is woken up when the adversary wakes up the agent at some round or an agent visits the starting node of the dormant agent.
Note that the adversary can awake dormant agents at different rounds.

An agent is modeled as a state machine $(S,\delta)$.
Here, $S$ is a set of agent states, and a state is represented by a tuple of the values of all the variables that an agent has.
The state transition function $\delta$ outputs the next agent state, whether the agent stays or leaves, and the outgoing port number if the agent leaves.
The outputs are determined from the current agent state, the states of other agents on the same node, the degree of the current node, and the entry port.
An agent has a special state representing the termination of an algorithm.
After reaching the state, the agent never executes the algorithm.
If several agents are on node $v$, the agents can read all the information that they have (even if some of them have terminated).
However, if two agents traverse the same edge simultaneously in different directions, the agents do not notice this fact.
When an agent enters a node $v$ via an edge $\{u,v\}$, it learns the degree $d(v)$ of $v$ and the port number $P_v(v,u)$.
Agents execute the algorithm synchronously.
That is, at the beginning of a round, each agent reads states of all agents on the current node, executes the state transition.
If an agent decides to move, it arrives at the destination node before the beginning of the next round.
Note that, in each round, all agents on a single node obtain the same information of states of the agents.

\subsection{Byzantine agents}
There are $f$ \emph{weakly Byzantine} agents among $k$ agents.
Weakly Byzantine agents act arbitrarily without following an algorithm, but except changing their IDs.
All agents except weakly Byzantine agents are called \emph{good}.
Good agents know neither the actual value nor the upper bound of $f$.
The adversary wakes up at least one good agent at the first round.

\subsection{The gathering problems}
We consider the following two problems.
The gathering problem \emph{with non-simultaneous termination} requires the following conditions:
(1) every good agent terminates an algorithm, and
(2) when all the good agents terminate an algorithm, they are on the same node.
The gathering problem \emph{with simultaneous termination} requires all the good agents to terminate an algorithm at the same round on the same node.

We measure the time complexity of a gathering algorithm by the number of rounds from beginning (i.e., the first good agent wakes up) to the round in which all the good agents terminate.

\subsection{Procedures}
In the proposed algorithms, we use the graph exploration procedure and the extended label proposed in the literature.

The exploration procedure, called \textit{EXPLO}$(N)$, allows an agent to traverse all nodes of any graph composed of at most $N$ nodes, starting from any node of the graph.
An implementation of this procedure is based on universal exploration sequences (UXS) and is a corollary of the result by Reingold \cite{Reingold2008}.
The number of moves of \textit{EXPLO}$(N)$ is denoted by $X_N$.

Let $b_1b_2\cdots b_\ell$ be the binary representation of $a_i.ID$, where $\ell = |a_i.ID|$.
The extended label of $a_i$ is defined as $a_i.ID^*$ $=10b_1b_1b_2b_2$ $\cdots$ $b_\ell b_\ell$ $10b_1b_1b_2b_2$ $\cdots$ $b_\ell b_\ell \cdots$.
We have the following lemma about the extended label $a_i.ID^*$, which is used to prove the correctness of the proposed algorithms.
\begin{lemma}
\label{lemma_ExtendedLabels}
\cite{Dessmark2006}
For two different agents $a_i$ and $a_j$, assume that $a_i.ID^*=x_1x_2\cdots$ and $a_j.ID^*=y_1y_2\cdots$ hold.
Then, for some $k\leq 2\lfloor \log (\min(a_i.ID,a_j.ID)) \rfloor +6$, $x_k\neq y_k$ holds.
\end{lemma}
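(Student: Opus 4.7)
My plan is to exploit the structural feature that every block of an extended label begins with the fixed marker $10$, while the remainder of the block consists of doubled bits of the form $b_t b_t$; consequently, no two adjacent symbols lying strictly inside the doubled-bit portion of a block can form $10$, since they must be $00$ or $11$. The proof then reduces to a short case analysis on the binary lengths of the two IDs.

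I would begin by setting $\ell_i=|a_i.ID|$ and $\ell_j=|a_j.ID|$ and assuming without loss of generality that $\ell_i\leq\ell_j$. Since a shorter binary representation corresponds to a smaller natural number, $\min(a_i.ID,a_j.ID)$ has binary length exactly $\ell_i=\lfloor\log\min(a_i.ID,a_j.ID)\rfloor+1$. Then I would split into two cases according to whether $\ell_i=\ell_j$ or $\ell_i<\ell_j$.

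In the case $\ell_i=\ell_j$, both extended labels share the period $2\ell_i+2$ and their first blocks align position by position. Since the IDs themselves differ, some bit index $t$ satisfies $b^{(i)}_t\neq b^{(j)}_t$, which forces the labels to disagree within the first block, at position at most $2\ell_i+2\leq 2\lfloor\log\min(a_i.ID,a_j.ID)\rfloor+4$. In the case $\ell_i<\ell_j$, I would focus on positions $2\ell_i+3$ and $2\ell_i+4$: in $a_i.ID^*$ these are precisely the $10$ opening of the second block, whereas in $a_j.ID^*$ they still lie inside the first (longer) block and encode the two consecutive copies of $b^{(j)}_{\ell_i+1}$, hence read either $00$ or $11$. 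Since $10$ matches neither, at least one of these positions produces a mismatch, yielding $k\leq 2\ell_i+4=2\lfloor\log\min(a_i.ID,a_j.ID)\rfloor+6$.

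The only delicate point, and the step I expect to be the main obstacle, is the positional bookkeeping in the second case: I must verify that position $2\ell_i+3$ in $a_j.ID^*$ really lands on the \emph{first} of a doubled-bit pair rather than straddling two different bits, so that the ``$10$ versus $00/11$'' contrast is valid. Concretely, this amounts to checking the indexing convention that positions $2t+1,2t+2$ of a block correspond to $b_t b_t$, which is routine; once confirmed, combining the two cases gives the claimed bound immediately.
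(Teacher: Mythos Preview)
Your proof is correct. The paper does not supply its own proof of this lemma; it simply cites the result from \cite{Dessmark2006}, so there is no in-paper argument to compare against. Your case analysis on $\ell_i=\ell_j$ versus $\ell_i<\ell_j$, together with the observation that positions $2\ell_i+3,2\ell_i+4$ in $a_i.ID^*$ carry the marker $10$ while the same positions in $a_j.ID^*$ carry a doubled bit $b^{(j)}_{\ell_i+1}b^{(j)}_{\ell_i+1}\in\{00,11\}$, is precisely the standard argument underlying this fact, and your indexing check (positions $2t+1,2t+2$ of a block hold $b_t b_t$) is accurate.
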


\section{A gathering algorithm with non-simultaneous termination}
\label{sec:notdetect}
In this section, we propose an algorithm for the gathering problem \emph{with non-simultaneous termination} by assuming a strong team composed of $4f^2+9f+4$ agents. That is, at least $(4f+4)(f+1)$ good agents exist in the network.
Recall that agents know $N$, but do not know $n$, $k$, or $f$.

\subsection{Overview}
The proposed algorithm aims to gather all good agents on a single node.
The algorithm achieves this goal by three stages: \CIST, \MGST, and \GST stages.
In the \CIST stage, agents collect IDs of all good agents.
In the \MGST stage, agents make a \emph{reliable group}, which is composed of at least $4f+4$ agents.
In the \GST stage, all good agents gather on a single node and achieve the gathering.
Each stage consists of multiple phases, and each phase consists of $P_N\ge X_N$ rounds.
We will discuss the actual value of $P_N$ later, and here just note that the duration of each phase is sufficient for an agent to explore the network by $\textit{EXPLO}(N)$.
For simplicity, we first explain the overview under the assumption that agents know $f$ and agents awake at the same round.
Under this assumption, all good agents start each phase at the same round.

In the \CIST stage, agents collect IDs of all good agents.
To do this, in the $x$-th phase of the \CIST stage, each agent $a_i$ reads the $x$-th bit of $a_i.ID^*$ and decides the behavior. 
If the bit is 1, $a_i$ executes \textit{EXPLO}$(N)$ during the phase.
If the bit is 0, $a_i$ waits during the phase.
Agent $a_i$ has variable $a_i.\IL$ to store a set of IDs, and if $a_i$ finds another agent on the same node while exploring or waiting, it records the agent's ID in $a_i.\IL$.
Agent $a_i$ executes this procedure until the $(2\lfloor\log (a_i.ID)\rfloor +6)$-th phase,
and then finishes the \CIST stage.
From Lemma \ref{lemma_ExtendedLabels}, $a_i$ can meet all other good agents and hence obtain IDs of all good agents.

In the \MGST stage, agents make a reliable group composed of at least $4f+4$ agents.
To do this, agents with small IDs keep waiting, and the other agents search for the agents with small IDs.
More concretely, if the $f+1$ smallest IDs in $a_i.\IL$ contains $a_i.ID$, $a_i$ keeps waiting during this stage.
Otherwise, $a_i$ assigns the smallest ID in $a_i.\IL$ to variable $a_i.\TAR$, and searches for the agent with ID $a_i.\TAR$, say $a_{target}$, by executing \textit{EXPLO}$(N)$ in a phase.
If $a_i$ finds $a_{target}$ on some node, it ends the search and waits on the node.
If $a_i$ does not find $a_{target}$ even after completing \textit{EXPLO}$(N)$, it regards $a_{target}$ as a Byzantine agent. 
In this case, $a_i$ assigns the second smallest ID in $a_i.\IL$ to $a_i.\TAR$, and searches for the agent with ID $a_i.\TAR$ in the next phase.
Agent $a_i$ continues this behavior until it finds a target agent.
Since there are at most $f$ Byzantine agents, the good agent with the smallest ID, say $a_{min}$, keeps waiting during the \MGST stage.
This means that agents always find $a_{min}$ if they search for $a_{min}$, and consequently, the number of agents searched for by good agents is at most $f+1$ (including $a_{min}$ and $f$ Byzantine agents).
Since at least $(4f+4)(f+1)$ good agents exist, even if the good agents are distributed to $f+1$ nodes evenly, at least $4f+4$ agents gather in one node according to the pigeonhole principle.
In other words, agents can make a reliable group.
The ID of the target agent in a reliable group is used as the group ID.
For \GST stage, a reliable group is divided into two groups, an exploring group and a waiting group, so that each of which contains at least $2f+2$ agents.

In the \GST stage, agents achieve the gathering after at least one reliable group is created. 
To do this, agents collect group IDs of all reliable groups in the first phase of the \GST stage. 
More concretely, while agents in a waiting group keep waiting, other agents (in an exploring group or not in a reliable group) explore the network by \textit{EXPLO}$(N)$.
When $a_i$ finds a reliable group, it records the group ID.
Note that, since each of an exploring group and a waiting group contains at least $2f+2$ agents, it contains at least $f+2$ good agents.
Therefore, when an agent meets an exploring or waiting group, the agent can understand that this group contains at least two good agents, and hence it is reliable.
In the second phase of the \GST stage, agents move to the node where the waiting group of the smallest group ID stays.
That is, while agents in the waiting group of the smallest group ID keep waiting, other agents search for the group by \textit{EXPLO}$(N)$.

However, there are three problems to implement the above behavior.
The first problem is that agents not in a reliable group cannot instantly know the fact that a reliable group has been created, and so they do not know when to transition to the \GST stage.
To solve this problem, we make agents execute the \MGST stage and the \GST stage alternately. 
Here, we design the two stages so that (1) agents achieve the gathering in the \GST stage if a reliable group is created in the \MGST stage, and (2) otherwise behaviors in the \GST stage do not affect the \MGST stage.
The second problem is that agents do not know $f$.
To solve this problem, at the end of the \CIST stage, agents estimate the number of Byzantine agents, say $\ESTF$, from the fact that at least $(4f+4)(f+1)$ good agents exist and their ID lists include IDs of all good agents.
However, values of $\ESTF$ differ by at most one among good agents, because some good agents may meet some Byzantine agents but others may not in the \CIST stage.
Therefore, we design the behaviors of the \MGST stage and the \GST stage so that agents can gather even if the estimated values have the difference.
The third problem is that some agents may be dormant.
To solve this problem, we make agents first explore the network by $\textit{EXPLO}(N)$ to wake up dormant agents.
As a result, we guarantee that all good agents start the algorithm within $X_N$ rounds, but there still exists a problem.
Good agents execute different phases at the same round because these agents woke up at different rounds.
So, we adjust the number of rounds of each phase to guarantee that all the good agents execute the same phase at the same time for sufficient rounds.

\subsection{Details}
\begin{figure}[t]
\begin{breakablealgorithm}
  \caption{Procedure Algorithm($N$) for an agent $a_i$ whose $a_i.ID=b_1b_2\cdots b_\ell$ where $\ell =|a_i.ID|$}
  \label{algo_entire}
  \begin{algorithmic}[1]
    \State $a_i.\STA \gets\SCI$
    \State $a_i.\IL \gets \{ a_i.ID\}$, $a_i.\BL \gets \emptyset$, $a_i.\GL \gets \emptyset$
    \State $a_i.\GID \gets NULL$
    \State $a_i.EndCI \gets False$
    \State $a_i.x \gets 1$
    
    \State Explore the network by \textit{EXPLO}$(N)$
    
    \While{$True$}
      \If{$a_i.EndCI=False$}
        \State Execute $a_i.x$-th phase of the \CIST stage
      \Else
        \State Execute the \MGST stage
      \EndIf
      \State $a_i.x\gets a_i.x+1$
      \State Execute the \GST stage
    \EndWhile
  \end{algorithmic} 
\end{breakablealgorithm} 
\end{figure}

\begin{figure}[t]
  \begin{center}
    \includegraphics[width=120mm]{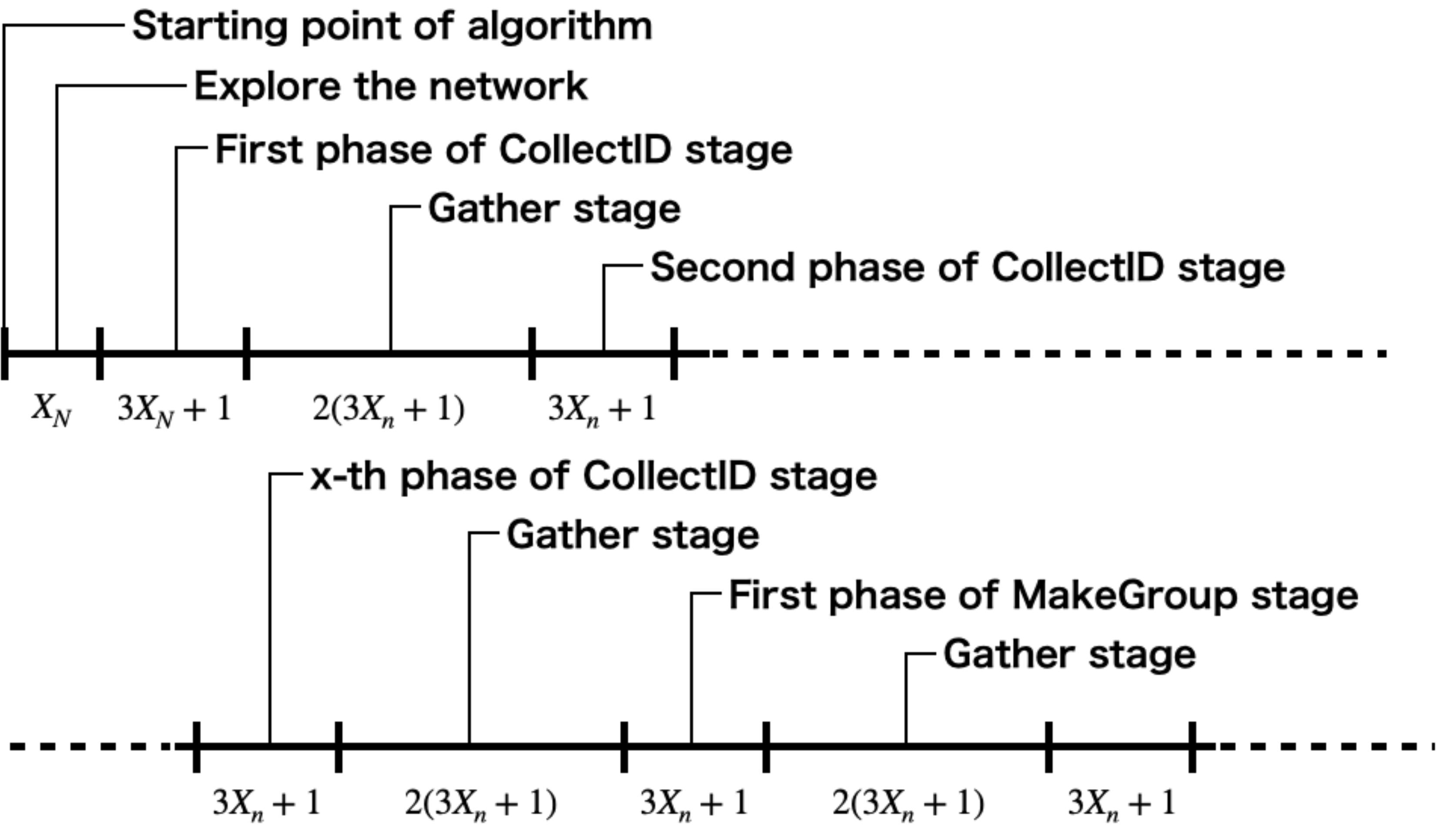}
    \caption{The stage flow.}
    \label{fig_stageflow}
  \end{center}
\end{figure}

\begin{table}[t]
  \centering
  \caption{Variables of agents.}
  \label{tab_valuelist}
  \begin{tabular}{cl}
    \hline
     Variable&\multicolumn{1}{c}{Explanation}\\\hline\hline
    $\STA$ & \begin{minipage}{120mm}
        \vspace{0.5mm}
        The current state of an agent.
        This variable takes one of the following values.
        \begin{itemize}
            \item $\SCI$
                (has not yet finished the \CIST stage)
            \item $\SMGSA$
                (works as a search agent in the \MGST stage)
            \item $\SMGTA$ 
                (works as a target agent in the \MGST stage)
            \item $\SGEG$ 
                (belongs to an exploring group in the \GST stage) 
            \item $\SGWG$ 
                (belongs to a waiting group in the \GST stage)
        \end{itemize}
        \end{minipage}\\ \hline
    $EndCI$ & \begin{tabular}{l}The variable that indicates whether an agent has finished the \CIST stage.\end{tabular} \\\hline
    $count$ & \begin{tabular}{l}The number of rounds from the beginning.\end{tabular} \\\hline
    $x$ & \begin{tabular}{l}The number of phases in the \CIST or  \MGST stage \end{tabular}\\\hline
    $\ESTF$ & \begin{tabular}{l}The estimated number of Byzantine agents.\end{tabular} \\\hline
    $\IL$ & \begin{tabular}{l}A set of agent IDs collected in the \CIST stage.\end{tabular} \\\hline
    $\BL$ & \begin{tabular}{l}A set of agent  IDs that the search agent regards as Byzantine agents.\end{tabular} \\\hline
    $\TAR$ & \begin{tabular}{l}Search agents: \begin{tabular}{l}The ID the agent searches for.\end{tabular}\\Target agents: \begin{tabular}{l}Its own ID.\end{tabular}\end{tabular} \\\hline
    $\GEF$ & \begin{tabular}{l}The consensus of $\ESTF$ among agents on the same node.\end{tabular} \\\hline
    $\GID$ & \begin{tabular}{l}The group ID of the reliable group that the agent belongs to.\end{tabular} \\\hline
    $\GL$ & \begin{tabular}{l}A set of group IDs collected in the \GST stage.\end{tabular} \\\hline
  \end{tabular}
\end{table}

Algorithm \ref{algo_entire} is the pseudocode of the proposed algorithm.
The proposed algorithm realizes the gathering using three stages:
The \CIST stage makes agents collect IDs of all good agents, the \MGST stage creates a reliable group composed of at least $4f+4$ agents, and the \GST stage gathers all good agents.

The overall flow of the algorithm is shown in Fig.\,\ref{fig_stageflow}.
After starting the algorithm, agent $a_i$ first explores the network with \textit{EXPLO}$(N)$ to wake up all dormant agents (line 6 of Algorithm \ref{algo_entire}).
By this behavior, after the first good agent wakes up, all good agents wake up within $X_N$ rounds.
After that, $a_i$ executes phases of the \CIST, \MGST, and \GST stages.
Here we define one phase as $P_N=3X_N+1$ rounds.
Since all good agents wake up within $X_N$ rounds, the $(X_N+1)$-th to $2X_N$-th rounds of the $x$-th phase of good agent $a_i$ overlap with the first $3X_N$ rounds of the $x$-th phases of all other good agents.
Hence, we have the following observation.

\begin{observation}
\label{observation_meet}
Let $a_i$ and $a_j$ be good agents. Assume that $a_i$ explores the network with $\textit{EXPLO}(N)$ from the $(X_N+1)$-th round to the $2X_N$-th round of its $x$-th phase, and $a_j$ waits during the first $3X_N$ rounds of its $x$-th phase. In this case, $a_i$ meets $a_j$ during the exploration.
\end{observation}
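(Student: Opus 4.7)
The plan is straightforward: verify that during the $X_N$-round exploration window of $a_i$, agent $a_j$ is parked on a single node $v$, and then invoke the universality of $\textit{EXPLO}(N)$ to conclude that $a_i$ must traverse $v$ during those rounds. Everything reduces to a single interval-containment check whose tightness is exactly what motivates the choice $P_N = 3X_N + 1$.

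First I would fix absolute-round notation. Let $r_i$ and $r_j$ denote the wake-up rounds of $a_i$ and $a_j$. The discussion immediately preceding the observation shows that $|r_i - r_j| \le X_N$: the first good agent to wake performs an initial $\textit{EXPLO}(N)$ (line 6 of Algorithm \ref{algo_entire}) whose $X_N$ rounds visit every node and hence wake every dormant good agent. Because each agent then spends its own first $X_N$ rounds on $\textit{EXPLO}(N)$ before entering phase 1, the phase-1 start rounds $s_i = r_i + X_N$ and $s_j = r_j + X_N$ also satisfy $|s_i - s_j| \le X_N$, and phase $x$ of each agent begins exactly $(x-1)P_N$ rounds after its phase 1.

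Using $P_N = 3X_N + 1$, the exploration window hypothesized for $a_i$ and the waiting window hypothesized for $a_j$ are
$$I_{\mathrm{exp}} = [\,s_i + (x-1)P_N + X_N,\; s_i + (x-1)P_N + 2X_N - 1\,],$$
$$I_{\mathrm{wait}} = [\,s_j + (x-1)P_N,\; s_j + (x-1)P_N + 3X_N - 1\,].$$
The core computation is the inclusion $I_{\mathrm{exp}} \subseteq I_{\mathrm{wait}}$. Setting $d = s_i - s_j$, both endpoint inequalities reduce to $|d| \le X_N$, which was established in the previous paragraph. In other words, the $X_N$-round exploration nests inside the $3X_N$-round waiting buffer with exactly the slack $X_N$ needed to absorb the wake-up drift on either side; this is precisely the role of the factor $3$ in $P_N$.

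To finish, I would invoke the defining property of $\textit{EXPLO}(N)$: starting from any node of any graph on at most $N$ nodes, it visits every node within $X_N$ moves. Hence during $I_{\mathrm{exp}}$ agent $a_i$ enters the node $v$ on which $a_j$ is waiting throughout $I_{\mathrm{wait}} \supseteq I_{\mathrm{exp}}$, so both agents are simultaneously on $v$ in some round, which is the required meeting. No substantive obstacle arises; the only care needed is to keep the two hypotheses of the observation distinct from the implicit bound $|s_i - s_j| \le X_N$ that must be borrowed from the paragraph preceding the observation, and to be careful with endpoint arithmetic when turning ``$k$-th round of phase $x$'' into absolute rounds.
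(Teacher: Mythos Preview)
Your proposal is correct and takes the same approach as the paper. The paper treats this as an observation justified by the single sentence preceding it (all good agents wake up within $X_N$ rounds, so the $(X_N+1)$-th through $2X_N$-th rounds of one agent's $x$-th phase lie inside the first $3X_N$ rounds of another's); your argument is simply a careful unpacking of that sentence with explicit interval arithmetic and an explicit appeal to the universality of $\textit{EXPLO}(N)$.
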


After the initial exploration, $a_i$ alternately executes one phase of the \CIST stage and two phases of the \GST stage (lines 9 and 14).
After $a_i$ finishes the \CIST stage, it alternately executes one phase of the \MGST stage (instead of the \CIST stage) and two phases of the \GST stage (lines 11 and 14).
The \GST stage interrupts the \CIST and \MGST stages, but, as described later, the behaviors of the \GST stage do not affect the behaviors of the \CIST and \MGST stages if no reliable group exists.
Therefore, we do not consider the behaviors of the \GST stage until a reliable group is created in the \MGST stage.

Table \ref{tab_valuelist} summarizes the variables used in the algorithm.
Agent $a_i$ stores the current state of $a_i$ in variable $a_i.\STA$.
Initially, $a_i.\STA = \SCI$ holds.
In addition, $a_i$ stores $False$ in variable $a_i.EndCI$ because it has not finished the \CIST stage.
Also, $a_i$ stores the number of rounds from the beginning in variable $a_i.count$. 
By variable $a_i.count$, $a_i$ determines which round of a phase it executes.
Agent $a_i$ increments $a_i.count$ for every round, but this behavior is omitted from the following description.

\subsubsection{The \CIST stage.}

\begin{figure}[t]
\begin{breakablealgorithm}
  \caption{The $a_i.x$-th phase of \CIST stage for an agent $a_i$}
  \label{CIST_code}
  \begin{algorithmic}[1]
    \If{the $a_i.x$-th bit of $a_i.ID^*$ is 0}
      \State Wait for $3X_N$ rounds on the current node
      \State $a_i.\IL \gets a_i.\IL \cup \{\textrm{IDs of agents $a_i$ met while waiting}\}$
    \Else
      \State Wait for $X_N$ rounds on the current node
      \State Explore the network by \textit{EXPLO}$(N)$
      \State Wait for $X_N$ rounds on the current node
      \State $a_i.\IL \gets a_i.\IL \cup \{\textrm{IDs of agents $a_i$ met while exploring}\}$
    \EndIf
    \State // The $(3X_N+1)$-th round
    \If{$a_i.x=2\lfloor \log a_i.ID \rfloor +6$}
      \State $a_i.\ESTF \gets \max\{y \mid (4y+4)(y+1)\leq |a_i.\IL|\}$
      \State $a_i.x\gets 1$
      \State $a_i.EndCI\gets True$
    \EndIf
    \State Wait for one round
  \end{algorithmic} 
\end{breakablealgorithm} 
\end{figure}

Algorithm \ref{CIST_code} is the pseudocode of the \CIST stage.
In the \CIST stage, agents collect IDs of all good agents.
The \CIST stage of $a_i$ consists of $2\lfloor\log (a_i.ID)\rfloor +6$ phases.
Note that the lengths of \CIST stages differ among agents.
Agent $a_i$ uses variable $a_i.\IL$ to store a set of IDs, and initially, it records $a_i.ID$ in $a_i.\IL$ (line 2 of Algorithm \ref{algo_entire}).
Agent $a_i$ determines the behavior of the $x$-th phase depending on the $x$-th bit of $a_i.ID^*$.
If the $x$-th bit is 0, $a_i$ waits for $3X_N$ rounds in the $x$-th phase (lines 1 to 2 of Algorithm \ref{CIST_code}).
If the $x$-th bit is 1, $a_i$ waits for $X_N$ rounds, explores the network by \textit{EXPLO}$(N)$, and then waits for $X_N$ round in the $x$-th phase (lines 4 to 7).
During these behaviors, if $a_i$ finds another agent $a_j$ on the same node, it records $a_j.ID$ in $a_i.\IL$ (lines 3 and 8).
Note that, from Lemma \ref{lemma_ExtendedLabels} and Observation \ref{observation_meet}, $a_i$ meets all good agents and records IDs of all good agents during the \CIST stage.

In the last round of the last phase of the \CIST stage, $a_i$ estimates the number of Byzantine agents $\ESTF$ as $a_i.\ESTF\gets \max\{y \mid (4y+4)(y+1)\leq |a_i.\IL|\}$ (line 12).
As we prove later, $a_i.\ESTF\geq f$ holds, and $|a_i.\ESTF-a_j.\ESTF|\leq 1$ holds for any good agent $a_j$.
Also, $a_i$ stores $True$ in $a_i.EndCI$ (line 14).

\subsubsection{The \MGST stage.}

\begin{figure}[!t]
\begin{breakablealgorithm}
  \caption{\MGST stage for an agent $a_i$}
  \label{MGST_code}
  \begin{algorithmic}[1]
    \If{$a_i.x=1$}
      \If{the smallest $a_i.\ESTF +1$ IDs in $a_i.\IL$ contain $a_i.ID$}
        \State $a_i.\STA \gets \SMGTA$
      \Else
        \State $a_i.\STA \gets \SMGSA$
      \EndIf
    \EndIf
    \If{$a_i.\STA =\SMGTA$}
      \State //$a_i$ is a target agent
      \State $a_i.\TAR \gets a_i.ID$
      \State Wait for one phase on the current node
      \State \textbf{and}
      \State While waiting, execute \textit{consensus}() every round
    \Else
      \State //$a_i$ is a search agent
      \State $a_i.\TAR \gets \min(a_i.\IL\setminus a_i.\BL)$
      \State Wait for $X_N$ rounds on the current node
      \State Search for an agent $a_{target}$ with ID $a_i.\TAR$ by \textit{EXPLO}$(N)$
      \State \textbf{and}
      \If{meet $a_{target}$ while searching}
        \State Stop \textit{EXPLO}$(N)$
        \State Wait until the end of the phase
        \State \textbf{and}
        \State While waiting, execute \textit{consensus}() every round
        \State \textbf{and}
        \If{$a_i$ finds $a_{target}$ Byzantine while waiting}
        \State // This is true if, during the $(X_N+1)$-th round to 
        \State // the $2X_N$-th round, $a_{target}$ moved to another
        \State // node or $a_{target}.\TAR\neq a_{target}.ID$ holds
          \State $a_i.\BL \gets a_i.\BL \cup \{ a_i.\TAR\}$  
        \EndIf
      \Else
      \State // Not meet $a_{target}$ and hence $a_{target}$ is Byzantine
      \State $a_i.\BL \gets a_i.\BL \cup \{ a_i.\TAR\}$
      \State Wait until the end of the phase
      \EndIf
    \EndIf
  \end{algorithmic} 
\end{breakablealgorithm} 
\end{figure}

\begin{figure}[!t]
\begin{breakablealgorithm}
  \caption{\textit{consensus()} for an agent $a_i$ (Compute the consensus of $\ESTF$ and determine whether a reliable group is created)}
  \label{Consensus_MGST_code}
  \begin{algorithmic}[1]
    \If{$a_i.GID = NULL$ and the number of agents in the \MGST stage on the current node is at least $4 \cdot a_i.\ESTF$}
      \State $a_i.\GEF\gets$ the most frequent value of $\ESTF$ of agents on the same node (if more than one most frequent value exists, choose the smallest one)
      \State Let $GC$ be a set of agents on the same node whose $\TAR$ is $a_i.\TAR$ and who execute the \MGST stage
      \If{$|GC|\geq 4\cdot a_i.\GEF +4$ and there exists $a_{target}$ with $a_{target}.\TAR=a_{target}.ID=a_i.\TAR$}
        \State $a_i.\GID \gets a_{target}.ID$
        \If{the $2\cdot a_i.\GEF +2$ smallest IDs in $GC$ contain $a_i.ID$}
          \State $a_i.\STA \gets \SGEG$
      \Else
          \State $a_i.\STA \gets \SGWG$
        \EndIf
      \EndIf
    \EndIf
  \end{algorithmic} 
\end{breakablealgorithm} 
\end{figure}

Algorithm \ref{MGST_code} is the pseudocode of the \MGST stage.
In the pseudo code, for simplicity we use \textbf{and} operation, which means that an agent executes the operations before and after the \textbf{and} operation at the same time.
In the \MGST stage, agents create a reliable group composed of at least $4f+4$ agents.
At the beginning of the \MGST stage, if the smallest $a_i.\ESTF+1$ IDs in $a_i.\IL$ contain $a_i.ID$, agent $a_i$ becomes a \emph{target agent} (line 3 of Algorithm \ref{MGST_code}).
Otherwise, $a_i$ becomes a \emph{search agent} (line 5).
Hereinafter, the good agent with the smallest ID is denoted by $a_{min}$.
As we prove later, $a_{min}$ always becomes a target agent.

If $a_i$ is a target agent, it executes $a_i.\TAR \gets a_i.ID$ (line 10) and waits for one phase on the current node (line 11). 
While waiting, $a_i$ executes procedure $\textit{consensus}()$ to determine whether a reliable group is created or not (line 13). 
We will explain the details of $\textit{consensus}()$ later.

Let us consider the case where $a_i$ is a search agent. 
The search agent $a_i$ stores in $a_i.\BL$ IDs of agents that $a_i$ regards as Byzantine agents (initially $a_i.\BL$ is empty).
In the first round of each phase, $a_i$ chooses the agent with the smallest ID, excluding Byzantine agents in $a_i.\BL$ (line 16).
After that, $a_i$ waits for $X_N$ rounds and then searches for the agent with ID $a_i.\TAR$, say $a_{target}$, by executing \textit{EXPLO}$(N)$ (lines 17 and 18).
If $a_i$ finds $a_{target}$ on the same node during the exploration, $a_i$ ends \textit{EXPLO}$(N)$ and waits on the node until the end of the phase (lines 21 to 22).
We can show that, if $a_{target}$ is good, $a_{target}$ keeps waiting as a target agent, and consequently, $a_i$ finds $a_{target}$ and waits with $a_{target}$.
Hence, if one of the following conditions holds, 
$a_i$ regards $a_{target}$ as a Byzantine agent: (1) $a_i$ did not find $a_{target}$ during the exploration (lines 33 to 34), or (2) after $a_i$ finds $a_{target}$, during the $(X_N+1)$-th round to the $2X_N$-th round, $a_{target}$ moved to another node or $a_{target}.\TAR\neq a_{target}.ID$ holds (lines 26 to 30).
In this case, $a_i$ adds $a_{target}.ID$ to $a_i.\BL$, and never searches for $a_{target}$ in the later phases of the \MGST stage (lines 30 and 34).
If $a_i$ did not find $a_{target}$, it waits until the end of the phase (line 35).

To determine whether agents can create a reliable group, search agents (resp., target agents) execute procedure $\textit{consensus}()$ in Algorithm \ref{Consensus_MGST_code} after they find their target agent (resp., from the beginning).
In procedure $\textit{consensus}()$, agent $a_i$ first calculates the consensus $a_i.\GEF$ of the estimated number of Byzantine agents as follows.
If the number of agents in the \MGST stage on the current node is at least $4\cdot a_i.\ESTF$, agent $a_i$ checks values of $\ESTF$ of all agents on the current node and assigns the most frequent value to $a_i.\GEF$ (line 2 of Algorithm \ref{Consensus_MGST_code}).
At this time, if multiple values are the most frequent, $a_i$ chooses the smallest one.

After that, $a_i$ determines whether a reliable group is created.
Agent $a_i$ observes states of all agents on the same node, and regards the set of agents whose $\TAR$ is $a_i.target$ and who execute the \MGST stage as the \emph{group candidate} (line 3).
If the group candidate contains at least $4\cdot a_i.\GEF+4$ agents and there exists $a_{target}$ with $a_{target}.\TAR=a_{target}.ID=a_i.\TAR$, $a_i$ regards the group candidate as a reliable group (line 4).
If $a_i$ understands that it is in a reliable group, $a_i$ stores $a_{target}.ID$ in variable $a_i.\GID$ as the group ID of the reliable group (line 5).
Note that, as we prove later, all other good agents in the reliable group also understand that they are in the reliable group, and assign $a_{target}.ID$ to their variable $\GID$ at the same round. 
Therefore, agents can identify members of a reliable group by observing variable $\GID$.
When a reliable group is created, the group is divided into two groups, a (reliable) \emph{exploring group} and a (reliable) \emph{waiting group}, for the \GST stage as follows.
If the $2\cdot a_i.\GEF+2$ smallest IDs among agents in $a_i$'s reliable group contains $a_i.ID$, $a_i$ belongs to an exploring group (line 7); otherwise, it belongs to a waiting group (line 9).
Note that each of an exploring group and a waiting group contains at least $2\cdot a_i.\GEF +2$ agents.

Once $a_i$ has determined that a reliable group is created, it does not calculate $a_i.\GEF$ and does not determine if a reliable group is created in subsequent rounds of this phase.
Note that some good agent $a_j$ with $a_j.\TAR=a_{target}.ID$ may visit the current node after $a_i$ determines a reliable group.
In this case, $a_j$ can become a member of the reliable group (i.e., $a_j.\GID \gets a_{target}.ID=a_i.\GID$).
This just increases the size of the reliable group and does not harm the algorithm.

\subsubsection{The \GST stage.}

\begin{figure}[!t]
\begin{breakablealgorithm}
  \caption{\GST stage for an agent $a_i$}
  \label{GST_code}
  \begin{algorithmic}[1]
    \If{$a_i.EndCI=False$}
      \State Wait for two phases on the current node
    \Else
      \State // The first phase
      \If{$a_i.\STA =\SGWG$}
        \State Wait for one phase on the current node
        \State \textbf{and}
        \State While waiting, whenever $a_i$ meets $a_j$ with $a_j.\GID \neq NULL$, execute $a_i.\GL \gets a_i.\GL \cup \{(a_j.\GID, a_j.ID)\}$
      \Else
        \State Wait for $X_N$ rounds on the current node
        \State Explore the network by \textit{EXPLO}$(N)$
        \State \textbf{and}
        \State While exploring, whenever $a_i$ meets $a_j$ with $a_j.\GID \neq NULL$, execute $a_i.\GL \gets a_i.\GL \cup \{(a_j.\GID, a_j.ID)\}$
        \State Wait for $X_N+1$ rounds on the current node
      \EndIf
    
      \State // The second phase
      \State //\textit{MemberID}$(gid)=\{id \mid (gid, id)\in a_i.\GL\}$
      \State //\textit{ReliableGID}$()=\{gid \mid |\textit{MemberID}(gid)|\geq a_i.\ESTF +1\}$
      \If{$\textit{ReliableGID}() =\emptyset$}
        \State Wait for one phase on the current node
      \ElsIf{$a_i.\STA =\SGWG$ and $a_i.\GID =\min(\textit{ReliableGID}())$}
        \State Wait for $3X_N$ rounds on the current node
        \State Terminate the algorithm
      \Else
        \State Wait for $X_N$ rounds on the current node
          \State By executing $\textit{EXPLO}(N)$, search for the node with a reliable waiting group whose group ID is $\min(\textit{ReliableGID}())$
          \State Wait on the node until the last round of the phase
          \State Terminate the algorithm at the last round of the phase%
      \EndIf
    \EndIf
  \end{algorithmic} 
\end{breakablealgorithm} 
\end{figure}

Algorithm \ref{GST_code} is the pseudocode of the \GST stage.
In the \GST stage, agents achieve the gathering if at least one reliable group exists in the network.
Note that two phases of the \GST stage interrupt phases of the \CIST and \MGST stages.
However, while executing the \GST stage, agents never update variables used in the \CIST and \MGST stages.
Also, recall that the behaviors of the \CIST and \MGST stages do not depend on the initial positions of agents in each phase.
Hence, the behaviors of the \GST stage do not affect the behaviors of the \CIST and \MGST stages.
If agents have not finished the \CIST stage, they wait for two phases (lines 1 to 2).
In the following, we describe the behaviors of agents that have finished the \CIST stage.

If agents have finished the \CIST stage, they try to achieve the gathering in two phases of the \GST stage.
In the first phase of the two phases, agents collect group IDs of all reliable groups (lines 4 to 15).
To do this, agents in waiting groups keep waiting for the phase, and other agents (agents in exploring groups and agents not in reliable groups) explore the network during the $(X_N+1)$-th round to the $2X_N$-th round.
During this behavior, when an agent finds a reliable waiting or exploring group, it records the group ID.
After that, in the second phase, they gather on the node where the reliable group with the smallest group ID exists (lines 16 to 29).

Here, we explain how agents find reliable exploring or waiting groups.
Since agents enter the \GST stage at different rounds, agents in a reliable group do not move together.
This implies that agent $a_i$ meets agents in a reliable group at different rounds.
For this reason, whenever agent $a_i$ meets $a_j$ with $a_j.\GID \neq NULL$ (i.e., $a_j$ says it is in a reliable group), $a_i$ adds a pair $(a_j.\GID,a_j.ID)$ in a set $a_i.\GL$.
Then, at the beginning of the second phase, $a_i$ checks $a_i.\GL$ and computes group IDs of reliable groups.
More concretely, $a_i$ determines that $gid$ is a group ID of a reliable group if there exist at least $a_i.\ESTF+1$ different IDs $id_1,id_2,\ldots$ such that $(gid,id_k)\in a_i.\GL$ for any $k$, that is, the number of agents that conveyed $gid$ as their group IDs is at least $a_i.\ESTF +1$.
In the rest of this paragraph, we explain why this threshold $a_i.\ESTF+1$ allows agent $a_i$ to recognize a reliable group correctly.
Assume that agent $a_i$ finds the exploring or waiting group that good agent $a_j$ belongs to.
Recall that the exploring or waiting group initially contains at least $2\cdot a_j.\GEF+2$ agents.
From this fact, even if $f\leq a_j.\GEF$ of them are Byzantine, at least $a_j.\GEF +2$ good agents convey their group ID to $a_i$.
Consequently, when $a_i$ finds the group, $a_i$ can determine that at least one good agent exists in this group because $|a_i.\ESTF -a_j.\GEF|\leq 1$ holds.
Therefore, if $a_i$ finds an exploring or waiting group (i.e., agents with the same $\GID$) composed of at least $a_i.\ESTF +1$ agents, $a_i$ can correctly recognize the group as a reliable group.

In the following, we explain the detailed behavior of agent $a_i$ in the two continuous phases of the \GST stage.

In the first phase, to collect all group IDs, agents in waiting groups keep waiting, and other agents (agents in exploring groups and agents not in reliable groups) explore the network.
To be more precise, if agent $a_i$ belongs to a reliable waiting group, $a_i$ collects pairs of a group ID and an agent ID in variable $a_i.\GL$ by waiting and observing visiting agents.
That is, $a_i$ waits for one phase, and if $a_i$ finds agent $a_j$ with $a_j.\GID\neq NULL$ while waiting, it adds $(a_j.\GID, a_j.ID)$ to $a_i.\GL$ (lines 6 to 8).
If agent $a_i$ belongs to a reliable exploring group or does not belong to a reliable group, $a_i$ collects pairs of a group ID and an agent ID in variable $a_i.\GL$ by exploring the network.
That is, $a_i$ waits for $X_N$ rounds, explores the network, and then waits for $X_N +1$ rounds.
If $a_i$ finds agent $a_j$ with $a_j.\GID\neq NULL$ during the exploration, it adds $(a_j.\GID, a_j.ID)$ to $a_i.\GL$ (lines 10 to 14).

In the second phase, all agents gather on the node where the reliable group with the smallest group ID exists.
Initially, $a_i$ calculates the set $\textit{ReliableGID}()$ of group IDs of all reliable groups as follows:
(1) $a_i$ makes, for each group ID $gid$ in $a_i.\GL$, a list of agent IDs that conveyed $gid$ as its group ID (i.e., \textit{MemberID}$(gid)=\{id \mid (gid, id)\in a_i.\GL\}$), and
(2) $a_i$ checks up group IDs such that at least $a_i.\ESTF+1$ agents conveyed the group ID (i.e., $\textit{ReliableGID}()=\{gid \mid |\textit{MemberID}(gid)| \geq a_i.\ESTF +1\}$).
Note that, if $a_i$ belongs to a reliable exploring (resp., waiting) group, $a_i.\GID\in\textit{ReliableGID}()$ holds because $a_i$ meets members of its own waiting (resp., exploring) group during the first phase.
If $a_i$ belongs to a reliable waiting group and satisfies $a_i.\GID=\min(\textit{ReliableGID}())$,
it waits for $3X_N$ rounds and terminates the algorithm (lines 21 to 23).
Otherwise, $a_i$ waits for $X_N$ rounds, and then, by executing \textit{EXPLO}$(N)$, searches for the node with the reliable waiting group whose group ID is $\min(\textit{ReliableGID}())$ (lines 25 to 26).
After that, $a_i$ waits until the last round of this phase and terminates the algorithm on the node (lines 27 to 28).

\subsection{Correctness and Complexity}
In this subsection, we prove correctness and complexity of the proposed algorithm.

\begin{lemma}
\label{lemma_AllGoodsKnowAllGoodsIDs}
Let $a_i$ be a good agent.
When $a_i$ finishes the \CIST stage, $a_i.\IL$ contains IDs of all good agents.
\end{lemma}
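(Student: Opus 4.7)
Fix an arbitrary good agent $a_j \ne a_i$; the plan is to pin down one specific CI-stage phase during which $a_i$ and $a_j$ are guaranteed to meet, so that $a_j.ID$ is placed in $a_i.\IL$ before $a_i$ terminates the CI stage.

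First I would apply Lemma~\ref{lemma_ExtendedLabels} to the pair $(a_i, a_j)$ to obtain an index $k \le 2\lfloor \log \min(a_i.ID, a_j.ID) \rfloor + 6$ at which their extended labels differ. Because $\min(a_i.ID, a_j.ID)$ is at most each of $a_i.ID$ and $a_j.ID$, the value $k$ is at most the length of both agents' CI stages, so both agents actually execute their $k$-th CI phase. In that phase, exactly one of $a_i,a_j$ has $k$-th extended-label bit equal to $1$ and therefore executes $\textit{EXPLO}(N)$ during rounds $X_N+1$ through $2X_N$ of its $k$-th phase, while the other has $k$-th bit $0$ and waits for the first $3X_N$ rounds of its $k$-th phase.

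Next I would invoke Observation~\ref{observation_meet} with the exploring agent in the role of the mover and the waiting agent in the role of the target, so that the two are on the same node at some round during the exploration. Consequently, whichever of the two is waiting records the other's ID into its $\IL$ via line~3 of Algorithm~\ref{CIST_code}, while whichever is exploring does so via line~8. Hence $a_j.ID$ is in $a_i.\IL$ by the end of $a_i$'s $k$-th CI phase, which occurs at or before $a_i$'s final CI phase; since $a_j$ was arbitrary, this establishes the claim.

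The main obstacle to keep an eye on is that the $k$-th CI phases of $a_i$ and $a_j$ must overlap in absolute time enough for Observation~\ref{observation_meet} to apply, since good agents wake up at different global rounds. This is exactly what the observation packages for us: the initial $\textit{EXPLO}(N)$ on line~6 of Algorithm~\ref{algo_entire} guarantees that every good agent wakes within $X_N$ rounds of the first good agent, so the middle $X_N$ rounds of any agent's $k$-th phase fall inside the first $3X_N$ rounds of every other good agent's $k$-th phase. No additional synchronization bookkeeping beyond invoking this observation is required.
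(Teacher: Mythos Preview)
Your proposal is correct and follows essentially the same approach as the paper's own proof, which simply cites Lemma~\ref{lemma_ExtendedLabels} and Observation~\ref{observation_meet} to conclude that $a_i$ meets every good agent during the \CIST stage. Your write-up is a more explicit unpacking of that two-line argument (fixing the differing bit index $k$, checking that $k$ lies within both agents' \CIST stages, and identifying which agent explores and which waits), but the underlying reasoning is identical.
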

\begin{proof}
By Lemma \ref{lemma_ExtendedLabels} and Observation \ref{observation_meet}, $a_i$ meets all good agents before the end of the \CIST stage, and records their IDs in $a_i.\IL$.
Therefore, $a_i.\IL$ contains IDs of all good agents at the end of the \CIST stage.
\end{proof}

\begin{lemma}
\label{lemma_EstimateF_1}
After good agent $a_i$ finishes the \CIST stage, $a_i.\ESTF\geq f$ and $k\geq (4a_i.\ESTF +4)(a_i.\ESTF +1)$ hold.
\end{lemma}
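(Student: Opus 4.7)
The plan is to decompose the statement into its two inequalities and derive each directly from the definition $a_i.\ESTF = \max\{y \mid (4y+4)(y+1) \leq |a_i.\IL|\}$ together with two-sided bounds on $|a_i.\IL|$.

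First I would establish a lower bound on $|a_i.\IL|$. By Lemma \ref{lemma_AllGoodsKnowAllGoodsIDs}, when $a_i$ finishes the \CIST stage, $a_i.\IL$ contains the IDs of all good agents. Since the algorithm assumes at least $4f^2 + 9f + 4$ agents in total and at most $f$ of them are Byzantine, the number of good agents is at least
\[
(4f^2 + 9f + 4) - f \;=\; 4f^2 + 8f + 4 \;=\; (4f+4)(f+1).
\]
Hence $|a_i.\IL| \geq (4f+4)(f+1)$, and plugging $y = f$ into the definition of $a_i.\ESTF$ shows that $f$ is a feasible value; therefore $a_i.\ESTF \geq f$.

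Next I would establish an upper bound on $|a_i.\IL|$ to obtain the second inequality. Here I use the fact that we are in the weakly Byzantine model: every agent (good or Byzantine) has a unique ID and cannot falsify it, so every element of $a_i.\IL$ is the genuine ID of some distinct agent actually present in the system. This gives $|a_i.\IL| \leq k$. Combining this with the defining inequality $(4\cdot a_i.\ESTF + 4)(a_i.\ESTF + 1) \leq |a_i.\IL|$ that $a_i.\ESTF$ satisfies by construction yields $k \geq (4\cdot a_i.\ESTF + 4)(a_i.\ESTF + 1)$, which is exactly the second claim.

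There is no real obstacle in this argument; the only subtle point is the use of the weakly Byzantine assumption to ensure that every entry of $a_i.\IL$ corresponds to a distinct real agent, so the upper bound $|a_i.\IL| \leq k$ is valid. Everything else is a direct arithmetic unpacking of the assumed lower bound $k \geq 4f^2+9f+4$ and the definition of $a_i.\ESTF$.
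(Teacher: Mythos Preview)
Your proof is correct and follows essentially the same approach as the paper's. The only difference is that where the paper tersely writes ``by the algorithm, we clearly have $k\geq (4a_i.\ESTF +4)(a_i.\ESTF +1)$,'' you make explicit the step $|a_i.\IL|\leq k$ via the weakly Byzantine assumption, which is a welcome clarification rather than a different argument.
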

\begin{proof}
By Lemma \ref{lemma_AllGoodsKnowAllGoodsIDs}, $a_i$ contains IDs of all good agents in $a_i.\IL$ at the end of \CIST stage, and so $|a_i.\IL|\geq (4f+4)(f+1)$ holds.
Therefore, we have $a_i.\ESTF=\max\{y\mid (4y+4)(y+1)\leq |a_i.\IL|\}\geq\max\{y\mid (4y+4)(y+1)\leq (4f+4)(f+1)\}=f$.
Also, by the algorithm, we clearly have $k\geq (4a_i.\ESTF +4)(a_i.\ESTF +1)$.
\end{proof}

\begin{lemma}
\label{lemma_EstimateF_2}
After good agents $a_i$ and $a_j$ finish the \CIST stage, $|a_i.\ESTF-a_j.\ESTF|\leq 1$ holds.
\end{lemma}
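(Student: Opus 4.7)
The plan is first to argue that $|a_i.\IL|$ and $|a_j.\IL|$ cannot differ by more than $f$, and then to show that the estimation rule $y \mapsto \max\{y : (4y+4)(y+1) \leq L\}$ cannot move by more than $1$ across an input change that small, using the already-established lower bound $\mathit{ESTF} \geq f$ from Lemma \ref{lemma_EstimateF_1}.

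For the first step I would invoke Lemma \ref{lemma_AllGoodsKnowAllGoodsIDs}, which guarantees that both $a_i.\IL$ and $a_j.\IL$ contain the common set $G$ of IDs of all good agents. Any ID in either list outside of $G$ must be the genuine ID of some Byzantine agent, since weakly Byzantine agents cannot falsify their IDs. Hence each of the two lists equals $G$ together with a subset of the at most $f$ Byzantine IDs, which immediately gives $\bigl||a_i.\IL|-|a_j.\IL|\bigr| \leq f$.

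For the second step, assume without loss of generality $a_i.\ESTF \leq a_j.\ESTF$, set $g(y) = (4y+4)(y+1) = 4(y+1)^2$, and suppose for contradiction that $a_j.\ESTF \geq a_i.\ESTF + 2$. By the defining maximality in line 12 of Algorithm \ref{CIST_code}, $g(a_j.\ESTF) \leq |a_j.\IL|$; and since $a_j.\ESTF - 1 > a_i.\ESTF$, the same maximality (applied to $a_i$) gives $g(a_j.\ESTF - 1) > |a_i.\IL|$. Subtracting these two inequalities yields
\[
|a_j.\IL| - |a_i.\IL| \;\geq\; g(a_j.\ESTF) - g(a_j.\ESTF - 1) + 1 \;=\; 8\,a_j.\ESTF + 5.
\]
Applying Lemma \ref{lemma_EstimateF_1} to $a_j$, we have $a_j.\ESTF \geq f$, so the right-hand side is at least $8f+5$, which strictly exceeds $f$ for every $f \geq 0$ and contradicts the first-step bound. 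Therefore $a_j.\ESTF - a_i.\ESTF \leq 1$, proving the lemma.

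I do not anticipate any real obstacle. The proof hinges on the elementary fact that $g(y)-g(y-1)=8y+4$ grows linearly in $y$, while the discrepancy between the two ID lists is bounded by the comparatively small quantity $f$; the already-established inequality $a_j.\ESTF \geq f$ is exactly what makes this linear gap dominate and forces the two estimates to lie within one of each other.
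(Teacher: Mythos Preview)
Your proof is correct and follows essentially the same approach as the paper: bound the difference $|a_j.\IL|-|a_i.\IL|$ from above by $f$ using Lemma~\ref{lemma_AllGoodsKnowAllGoodsIDs} and the fact that weakly Byzantine agents cannot forge IDs, bound it from below by a quantity linear in an $\ESTF$ value using the maximality in the definition of $\ESTF$, and derive a contradiction via Lemma~\ref{lemma_EstimateF_1}. The only cosmetic differences are that the paper fixes $p=a_i.\ESTF$ and compares $g(p+2)$ with $g(p+1)$ (obtaining $8p+20$), whereas you compare $g(a_j.\ESTF)$ with $g(a_j.\ESTF-1)$ (obtaining $8\,a_j.\ESTF+5$); both gaps dominate $f$ once $\ESTF\ge f$ is invoked.
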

\begin{proof}
We prove this lemma by contradiction.
Without loss of generality, we assume $a_i.\ESTF=p$ and $a_j.\ESTF\geq p+2$.
We have $(4(p+1)+4)((p+1)+1)>|a_i.\IL|$ by $a_i.\ESTF<p+1$, and we have $(4(p+2)+4)((p+2)+1)\leq |a_j.\IL|$ by $a_j.\ESTF\geq p+2$.
Therefore, since $p\geq f$ holds by Lemma \ref{lemma_EstimateF_1}, $|a_j.L|-|a_i.L|>8p+20>f$ holds.
On the other hand, since $a_i.\IL$ and $a_j.\IL$ include IDs of all good agents by Lemma \ref{lemma_AllGoodsKnowAllGoodsIDs}, we have $|a_j.L|-|a_i.L|\leq f$, which contradicts the assumption.
\end{proof}

Let $\EFM$ be the largest value of $\ESTF$ among all good agents at the time when all good agents finish the \CIST stage.

\begin{lemma}
\label{lemma_NumTargetAgent}
The followings hold in the \MGST stage:
(1) $a_{min}$ is a target agent, and
(2) the number of good target agents is at most $\EFM +1$.
\end{lemma}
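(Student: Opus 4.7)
The plan is to use Lemmas \ref{lemma_AllGoodsKnowAllGoodsIDs} and \ref{lemma_EstimateF_1} to control the position of a good agent's own ID within its list $\IL$, and then translate ``being a target agent'' into ``having few good IDs below one's own.''

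For part (1), I would argue as follows. By Lemma \ref{lemma_AllGoodsKnowAllGoodsIDs}, $a_{min}.\IL$ contains the IDs of every good agent; in particular $a_{min}.ID$ lies in $a_{min}.\IL$. Since $a_{min}$ is the good agent with the smallest ID, any element of $a_{min}.\IL$ strictly smaller than $a_{min}.ID$ must belong to a Byzantine agent, so there are at most $f$ such elements. By Lemma \ref{lemma_EstimateF_1} we have $a_{min}.\ESTF \geq f$, hence $a_{min}.ID$ is among the $f+1 \leq a_{min}.\ESTF+1$ smallest elements of $a_{min}.\IL$. By the check on line 2 of Algorithm \ref{MGST_code}, $a_{min}$ therefore sets $a_{min}.\STA \gets \SMGTA$ and becomes a target agent.

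For part (2), fix any good target agent $a_i$. By the same algorithmic test, $a_i.ID$ belongs to the $a_i.\ESTF+1$ smallest IDs of $a_i.\IL$, which means that at most $a_i.\ESTF$ elements of $a_i.\IL$ are strictly smaller than $a_i.ID$. By Lemma \ref{lemma_AllGoodsKnowAllGoodsIDs}, $a_i.\IL$ already contains the IDs of all good agents, so the number of good agents with ID strictly smaller than $a_i.ID$ is at most $a_i.\ESTF \leq \EFM$ by the definition of $\EFM$. Hence every good target agent is among the $\EFM+1$ good agents with the smallest IDs, giving the desired bound.

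The argument is essentially bookkeeping, so there is no real obstacle; the only point that requires a little care is to notice that the threshold $a_i.\ESTF+1$ appearing in the condition of the algorithm must be upper bounded uniformly by $\EFM+1$ (valid by definition of $\EFM$) to convert a per-agent condition into a global count, and that Byzantine IDs below $a_{min}.ID$ are absorbed by the slack $a_{min}.\ESTF \geq f$ in part (1).
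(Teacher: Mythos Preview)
Your proof is correct and essentially matches the paper's argument; the only cosmetic difference is that the paper phrases part~(2) as a contradiction (taking a hypothetical $(\EFM+2)$-th good target agent $a_{max}$ and deriving a contradiction), whereas you give the equivalent direct count showing every good target agent lies among the $\EFM+1$ good agents with smallest IDs.
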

\begin{proof}
First, we prove proposition (1).
By Lemma \ref{lemma_EstimateF_1}, $a_{min}.\ESTF\geq f$ holds; thus, the $a_{min}.\ESTF+1$ $(\geq f+1)$ smallest IDs in $a_{min}.\IL$ contain $a_{min}.ID$. 
Therefore, $a_{min}$ is a target agent.

Next, we prove proposition (2) by contradiction.
Let us assume that proposition (2) does not hold.
That is, at least $\EFM +2$ good agents become target agents.
Let $a_{max}$ be the agent with the largest ID among the good target agents.
Since $a_{max}.\IL$ contains IDs of other $\EFM +1$ good agents that have smaller IDs than $a_{max}$,
$a_{max}$ does not become a target agent.
This is a contradiction.
Hence, the lemma holds.
\end{proof}

\begin{lemma}
\label{lemma_BlackList}
Let $a_i$ be a good agent.
Variable $a_i.\BL$ does not contain any ID of good agents.
\end{lemma}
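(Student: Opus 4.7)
The plan is to induct on the number of \MGST phases that $a_i$ has completed. The base case is immediate since $a_i.\BL$ is initialized to $\emptyset$. For the inductive step, note that $a_i.\BL$ is modified only when $a_i$ acts as a search agent and concludes, via one of the two Byzantine tests in Algorithm~\ref{MGST_code}---either not meeting $a_{target}$ during \textit{EXPLO}$(N)$, or meeting it but observing $a_{target}$ move or have $a_{target}.\TAR\neq a_{target}.ID$ during the window of rounds $X_N+1$ through $2X_N$---that its target is Byzantine. So it suffices to show that if $a_i$ ever targets a good agent $a_{target}$, neither test fires.

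First I would show that the only good agent $a_i$ can target is $a_{min}$. Since $a_i.\TAR = \min(a_i.\IL \setminus a_i.\BL)$ and the inductive hypothesis says every ID currently in $a_i.\BL$ belongs to a Byzantine agent, every ID of $a_i.\IL$ strictly smaller than $a_i.\TAR$ is Byzantine. Lemma~\ref{lemma_AllGoodsKnowAllGoodsIDs} gives $a_{min}.ID \in a_i.\IL$, and since $a_{min}$ is good we have $a_{min}.ID \notin a_i.\BL$; hence $a_i.\TAR \le a_{min}.ID$. Combined with $a_{min}$ having the smallest ID among good agents, this forces $a_i.\TAR = a_{min}.ID$ and so $a_{target}=a_{min}$. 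By Lemma~\ref{lemma_NumTargetAgent}(1), $a_{min}$ is a target agent in every one of its \MGST phases, so it remains at a single node for the full $3X_N+1$ rounds of each such phase while continuously maintaining $a_{min}.\TAR = a_{min}.ID$.

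I then apply Observation~\ref{observation_meet} to $a_i$'s \textit{EXPLO}$(N)$ call during rounds $X_N+1$ through $2X_N$ of its current phase: because the overall-phase alignment guarantees that $a_{min}$ is in one of its \MGST phases (and hence waiting) throughout that same window, $a_i$ must encounter $a_{min}$, ruling out the ``did not meet $a_{target}$'' condition. After meeting, $a_i$ stays put on that node, and during the very window of rounds $X_N+1$ to $2X_N$ on which the second Byzantine test is evaluated, $a_{min}$ neither moves nor changes $\TAR$, ruling out that condition as well. Thus $a_{min}.ID$ is never added to $a_i.\BL$, completing the induction. The main obstacle is verifying the timing alignment in the last step: I must confirm that the exploration window of $a_i$'s \MGST phase actually lies inside a phase in which $a_{min}$ is executing \MGST rather than a \GST phase in which $a_{min}$ could be moving. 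This follows from the fixed iteration pattern in Algorithm~\ref{algo_entire} (one \CIST/\MGST phase followed by two \GST phases), from $a_{min}$ having the smallest ID and hence the shortest \CIST stage so that it finishes \CIST no later than $a_i$, and from the $X_N$-round bound on wake-up skew enforced by the initial \textit{EXPLO}$(N)$ call.
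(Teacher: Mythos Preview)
Your proposal is correct and follows essentially the same approach as the paper's own proof: induction over \MGST phases, reduction of the possible good target to $a_{min}$ via Lemma~\ref{lemma_AllGoodsKnowAllGoodsIDs} and the inductive hypothesis, and then use of Lemma~\ref{lemma_NumTargetAgent}(1) together with the phase-overlap/timing argument (which the paper also justifies by noting that $a_{min}$ has the shortest \CIST stage and hence enters \MGST no later than $a_i$). Your explicit invocation of Observation~\ref{observation_meet} and your discussion of the timing alignment are slightly more detailed than the paper's presentation, but the underlying argument is identical.
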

\begin{proof}
We prove by induction.
Recall that $a_i$ adds $a_i.\TAR$ to $a_i.\BL$ in a phase of the \MGST stage only when one of the following conditions holds.
Let $a_{target}$ be the agent such that $a_i.\TAR=a_{target}.ID$ holds.
\begin{enumerate}
\item Agent $a_i$ did not find $a_{target}$ during the phase (line 34 of Alg.~\ref{MGST_code}).
\item After $a_i$ found $a_{target}$, during the $(X_N+1)$-th round to the $2X_N$-th round of the phase, $a_{target}$ moved to another node or $a_{target}.\TAR\neq a_{target}.ID$ holds (line 30 of Alg.~\ref{MGST_code}).
\end{enumerate}

For the base case, we consider the first phase of the \MGST stage of $a_i$.
By Lemma \ref{lemma_AllGoodsKnowAllGoodsIDs}, $a_i.\IL$ contains IDs of all good agents.
Since $a_i.\BL$ is empty at the beginning of the first phase, $a_i.\TAR$ $(= \min(a_i.\IL))$ is $a_{min}.ID$ or an ID of a Byzantine agent.
But, here, it is sufficient to consider only the former case.
Since $a_{min}$ has the smallest ID among good agents, the duration of the \CIST stage is the shortest among good agents.
Hence, $a_{min}$ starts the \MGST stage before $a_i$ starts the $(X_N+1)$-th round of the first phase of the \MGST stage.
Since $a_{min}$ is a target agent by Lemma \ref{lemma_NumTargetAgent}, $a_{min}$ continues to wait during the \MGST stage. %
This implies that the above conditions to update $a_i.\BL$ are not satisfied. 
Hence, $a_i$ does not update $a_i.\BL$, and the lemma holds in the first phase.

For the induction, assume that $a_i.\BL$ does not contain IDs of good agents at the end of the $t$-th phase of the \MGST stage of $a_i$.
We consider the $(t+1)$-th phase of the \MGST stage of $a_i$. 
Since $a_i.\BL$ does not contain IDs of the good agents at the beginning of the $(t+1)$-th phase, $a_i.\TAR = \min(a_i.\IL\setminus a_i.\BL)$ is $a_{min}.ID$ or an ID of a Byzantine agent.
By the same discussion as in the first phase, we can prove that IDs of good agents are not added to $a_i.\BL$ in the $(t+1)$-th phase.
Therefore, this lemma holds in the $(t+1)$-th phase.
Hence, the lemma holds.
\end{proof}

\begin{lemma}
\label{lemma_GoodsFSomeGoodsEF}
When good agent $a_i$ executes $a_i.\GEF \gets \ESTF'$ in $\textit{consensus}()$, there exists good agent $a_j$ with $a_j.\ESTF=\ESTF'$
\end{lemma}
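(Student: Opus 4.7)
My plan is to argue by contradiction: assume that when $a_i$ assigns $\ESTF'$ to $a_i.\GEF$, no good agent on the current node has $\ESTF = \ESTF'$, and derive that $\ESTF'$ cannot in fact be the (tie-broken) most frequent $\ESTF$-value among the MGST-stage agents on that node. The whole argument is a counting argument that combines the trigger condition in Algorithm~\ref{Consensus_MGST_code} with the two lower bounds on good-agent $\ESTF$-values provided by Lemmas~\ref{lemma_EstimateF_1} and \ref{lemma_EstimateF_2}.

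First, I would unpack the precondition on line~1 of \textit{consensus}(): at the moment of the assignment, the node carries a set $M$ of at least $4 \cdot a_i.\ESTF$ agents in the \MGST stage. Because at most $f$ agents in the whole system are Byzantine, the subset $G \subseteq M$ of good agents satisfies $|G| \ge 4 \cdot a_i.\ESTF - f$. Lemma~\ref{lemma_EstimateF_1} gives $a_i.\ESTF \ge f$, so $|G| \ge 3 \cdot a_i.\ESTF \ge 3f$. Every agent in $G$ has already finished the \CIST stage (otherwise it would not be in the \MGST stage), so each $a_p \in G$ has a well-defined $a_p.\ESTF$.

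Next, I would use Lemma~\ref{lemma_EstimateF_2}: any two good agents differ in $\ESTF$ by at most $1$, so the set $\{a_p.\ESTF : a_p \in G\}$ consists of at most two consecutive integers. A pigeonhole argument over $G$ then produces a value $v^\star$ that is held by at least $\lceil |G|/2 \rceil \ge \lceil 3f/2 \rceil$ good agents of $M$. Assuming for contradiction that no good agent in $M$ has $\ESTF = \ESTF'$, every agent in $M$ carrying $\ESTF'$ is Byzantine, so $\ESTF'$ appears in $M$ at most $f$ times. Since $a_i$ chose $\ESTF'$ as the most frequent value, every value must appear at most $f$ times; in particular $v^\star$ does, giving $\lceil 3f/2 \rceil \le f$, which fails whenever $f \ge 1$.

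The only case that needs to be dispatched separately is $f = 0$, where there are no Byzantine agents at all and the conclusion is immediate: any agent on the node with $\ESTF = \ESTF'$ is good. The main (minor) obstacle I anticipate is bookkeeping around the tie-breaking rule on line~2 of Algorithm~\ref{Consensus_MGST_code}: I have to argue about \emph{the} most frequent value rather than \emph{a} most frequent value, but since the contradiction only uses the inequality ``count of $\ESTF'$ $\ge$ count of every other value'' (which holds under either reading of ``most frequent''), the tie-breaking rule does not actually interfere with the argument.
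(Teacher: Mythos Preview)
Your proof is correct and follows essentially the same counting argument as the paper: use the line~1 trigger of \textit{consensus}() together with Lemma~\ref{lemma_EstimateF_1} to get at least $3f$ good \MGST-stage agents on the node, apply Lemma~\ref{lemma_EstimateF_2} to confine their $\ESTF$ values to at most two integers, and pigeonhole to find a value held by more than $f$ good agents, which forces the selected most-frequent value to coincide with some good agent's $\ESTF$. Your explicit treatment of the $f=0$ boundary case and of the tie-breaking rule are refinements the paper's proof leaves implicit (indeed the paper's inequality $\lceil 3f/2\rceil > f$ is vacuous at $f=0$), but the underlying idea is the same.
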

\begin{proof}
Assume that $a_i$ executes $a_i.\GEF \gets \ESTF'$ on node $v$ in round $r$.
By the algorithm, in round $r$, there exist at least $4\cdot a_i.\ESTF$ agents executing the \MGST stage on node $v$.
Since $a_i.\ESTF\geq f$ holds by Lemma \ref{lemma_EstimateF_1}, there exist at least $4\cdot a_i.\ESTF -f\geq 4f-f=3f$ good agents executing the \MGST stage on $v$ in round $r$.
Also, since variable $\ESTF$ of good agents takes at most two possible values by Lemma \ref{lemma_EstimateF_2}, at least $\lceil 3f/2\rceil >f$ good agents on $v$ have the same value of $\ESTF$.
Therefore, in round $r$, $a_i$ stores the value of variable $\ESTF$ of some good agent in $a_i.\GEF$.
Hence, the lemma holds.
\end{proof}

\begin{lemma}
\label{lemma_ReliableGroupEstimateFAndGID}
If good agent $a_i$ determines that a reliable group is created on node $v$ in round $r$,
there exists a set $A'$ of agents that satisfies the following conditions:
\begin{itemize}
\item
Set $A'$ contains at least $4\cdot a_i.\GEF +4$ agents.
\item
Good agents in $A'$ determine that a reliable group is created on $v$ in round $r$.
\item
For any good agent $a_j$ in $A'$, $a_j.\GEF =a_i.\GEF$ and $a_j.\GID =a_i.\GID$ hold at the end of round $r$.
\end{itemize}
\end{lemma}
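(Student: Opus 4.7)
The plan is to let $A'$ be the group candidate $GC$ that $a_i$ constructs on $v$ at round $r$ when it runs $\textit{consensus}()$. The size bound for $A'$ is immediate, because the very test on line~4 of Algorithm~\ref{Consensus_MGST_code} that allows $a_i$ to set $a_i.\GID$ is $|GC|\ge 4\cdot a_i.\GEF+4$. By the definition of $GC$, every agent of $A'$ is on $v$ at round $r$, is executing the \MGST stage, and has $\TAR=a_i.\TAR$, and in particular the target $a_{target}$ with $a_{target}.\TAR=a_{target}.ID=a_i.\TAR$ lies in $A'$.

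Next I would fix an arbitrary good $a_j\in A'$ and argue that $a_j$ also executes $\textit{consensus}()$ in round $r$ and reaches exactly the same conclusion as $a_i$. Since $a_j$ is in the \MGST stage on $v$ in round $r$, $a_j$ invokes $\textit{consensus}()$ (on line~13 of Algorithm~\ref{MGST_code} if $a_j$ is a target, on line~23 if $a_j$ is a search agent that has already found its target). The entry guard on line~1 of Algorithm~\ref{Consensus_MGST_code} demands $a_j.\GID=NULL$ together with at least $4\cdot a_j.\ESTF$ \MGST agents on $v$. The counting part, which is the key numerical step, goes $4\cdot a_j.\ESTF\le 4\cdot a_i.\GEF+4\le |GC|$: Lemma~\ref{lemma_GoodsFSomeGoodsEF} supplies a good agent $a_m$ with $a_m.\ESTF=a_i.\GEF$, and Lemma~\ref{lemma_EstimateF_2} applied to $a_j$ and $a_m$ yields $a_j.\ESTF\le a_m.\ESTF+1$. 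Once inside $\textit{consensus}()$, $a_j$ observes the same states of all agents on $v$ as $a_i$, so the majority-with-smallest-tie-breaking rule on line~2 gives $a_j.\GEF=a_i.\GEF$; the set recomputed on line~3 is again $GC$; the threshold $|GC|\ge 4\cdot a_j.\GEF+4$ holds; the same $a_{target}$ is visible on $v$; and therefore $a_j$ sets $a_j.\GID\gets a_{target}.ID=a_i.\GID$ in round $r$, which discharges the second and third bullets simultaneously.

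The main obstacle I expect is verifying the $\GID=NULL$ part of the entry guard for every good $a_j\in A'$. If some good $a_j\in A'$ already had its $\GID$ set in an earlier round $r'<r$, then $a_j$ would not re-enter $\textit{consensus}()$ in round $r$ and the literal reading of the second bullet would fail for $a_j$. I plan to handle this by noting that any such earlier assignment of $a_j.\GID$ must have been the ID of a target with $\TAR=a_j.\TAR$, that $\TAR$ is kept fixed throughout the current phase, and that on $v$ in round $r$ the only such target is the one $a_i$ observes, so $a_j.\GID=a_i.\GID$ already holds and the third bullet survives; the second bullet will then be understood as applying to those good agents of $A'$ whose $\GID$ was still $NULL$ at the beginning of round $r$, which are precisely the ones the argument of the previous paragraph already handles.
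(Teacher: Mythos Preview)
Your proposal is essentially the paper's own argument: take $A'$ to be the group candidate $GC$, read off the size bound from the line~4 test, use Lemmas~\ref{lemma_EstimateF_2} and~\ref{lemma_GoodsFSomeGoodsEF} to obtain $4\cdot a_j.\ESTF\le 4\cdot a_i.\GEF+4$ so that every good $a_j\in A'$ passes the count threshold on line~1 of Algorithm~\ref{Consensus_MGST_code}, and then conclude $a_j.\GEF=a_i.\GEF$ and $a_j.\GID=a_i.\GID$ because $a_j$ observes exactly the same node state as $a_i$.

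You are in fact more careful than the paper in flagging the $a_j.\GID=NULL$ half of the line~1 guard, which the paper's proof passes over in silence. Your patch correctly recovers $a_j.\GID=a_i.\GID$ in that case (the unique agent with $\TAR=ID=a_i.\TAR$ forces this), though strictly speaking it does not by itself recover $a_j.\GEF=a_i.\GEF$ for an $a_j$ that fixed its $\GEF$ at an earlier round with a different multiset of agents present---a wrinkle the paper's own proof also does not address.
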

\begin{proof}
Assume that good agent $a_i$ determines that a reliable group is created on $v$ in round $r$.
Let $A'$ be a set of agents such that, iff $a_j\in A'$ holds, $a_j$ stays on $v$ in round $r$ and $a_j.\TAR=a_i.\TAR$ holds.
We prove that $A'$ satisfies the conditions of the lemma.
Since $a_i$ determines that a reliable group is created, $A'$ contains at least $4\cdot a_i.\GEF +4$ agents.
Also, $A'$ contains agent $a_{target}$ with $a_{target}.ID=a_i.\TAR$.
Fix an agent $a_j\in A'$.
By Lemmas \ref{lemma_EstimateF_2} and \ref{lemma_GoodsFSomeGoodsEF}, $a_j.\ESTF\leq a_i.\GEF+1$ holds, and hence, $4\cdot a_i.\GEF+4\geq 4\cdot a_j.\ESTF$ hold.
This implies that the number of agents on $v$ satisfies the condition that $a_j$ calculates $a_j.\GEF$ (line 1 of Algorithm \ref{Consensus_MGST_code}).
Since the situation of $v$ is the same for both $a_i$ and $a_j$, $a_j.\GEF=a_i.\GEF$ holds.
In addition, $a_j$ also observes agents in $A'$; then, $a_j$ determines that a reliable group is created on $v$ in round $r$.
Thus, $a_j$ executes $a_j.\GID \gets a_j.\TAR$ $(=a_i.\TAR)$, and $a_j.\GID =a_i.\GID$ holds.
Hence, the lemma holds.
\end{proof}

In the following two lemmas, we prove that a reliable group is created before all good agents finish the $(f+1)$-th phase of the \MGST stage. 
Let $a_{last}$ be the good agent that finishes the \CIST stage last,
and let $\textit{phase}_x$ be the $x$-th phase of the \MGST stage of $a_{last}$.
Since all agents wake up within $X_N$ rounds and each phase consists of $3X_N+1$ rounds, any good agent $a_i$ has exactly one phase $\textit{phase}^i_x$ that overlaps $\textit{phase}_x$ for at least $2X_N+1$ rounds.
For simplicity, when agent $a_i$ behaves in $\textit{phase}^i_x$, we say that $a_i$ behaves in the $x$-th phase (of the \MGST stage) of $a_{last}$.

\begin{lemma}
\label{lemma_GoodBelongToByzTarget}
Let $Byz_1,Byz_2,\ldots,Byz_{f'}$ ($Byz_l.ID < Byz_{l+1}.ID$ for $1\leq l\leq f'-1$) be Byzantine agents whose IDs are smaller than $a_{min}$.
Assume that, when $a_{last}$ finishes the $f'$-th phase of the \MGST stage, a reliable group does not exist.
Then, in the $(f'+1)$-th phase of the \MGST stage of $a_{last}$, at most $(4\EFM +2)f'$ good agents assign $bid\in\{Byz_1.ID,Byz_2.ID,\ldots,$ $Byz_{f'}.ID\}$ to their variable $\TAR$. 
\end{lemma}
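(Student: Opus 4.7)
The plan is to bound $|A|$, where $A$ denotes the set of good agents assigning some $bid \in \{Byz_1.ID, \ldots, Byz_{f'}.ID\}$ to $\TAR$ at the start of the $(f'+1)$-th phase of $a_{last}$. The strategy charges each $a_i \in A$ to a ``persistence event'' in one of the preceding $f'$ phases, where such an event is a phase in which $a_i$ targets a Byzantine but does not add it to its blacklist.

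First I would show that for every $a_i \in A$ and every phase $x \in \{1, \ldots, f'+1\}$, $a_i.\TAR$ equals some $Byz_l.ID$ with $l \leq f'$. The rule $\TAR = \min(\IL \setminus \BL)$ combined with the monotone growth of $\BL$ forces the $\TAR$ sequence to be non-decreasing in ID, and by Lemma \ref{lemma_BlackList} the good ID $a_{min}.ID$ never enters $\BL$, so once $\TAR$ reaches $a_{min}.ID$ it stays there forever. Because $a_i \in A$ still targets a Byzantine in phase $f'+1$, it has not yet reached $a_{min}$, and the IDs in $a_i.\IL$ strictly below $a_{min}.ID$ lie in $\{Byz_1.ID, \ldots, Byz_{f'}.ID\}$. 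Next, a pigeonhole argument shows that each $a_i \in A$ persists in at least one of the phases $1, \ldots, f'$: letting $k_i$ be the number of distinct $\TAR$ values taken by $a_i$ over the $f'+1$ phases, $k_i \leq f'$, and since every $\TAR$ change corresponds to a blacklist event, $a_i$ blacklists at most $k_i - 1 \leq f' - 1$ times, leaving at least one persistence event.

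For the quantitative core I would show that, in each phase $x \in \{1, \ldots, f'\}$ and for each Byzantine $Byz_j$, at most $4\EFM + 2$ good agents persist with $Byz_j$ in phase $x$. Such persisters are, by the algorithm, co-located with $Byz_j$ on the same node with matching $\TAR = Byz_j.ID$ throughout the consensus window, and together with $Byz_j$ they form a group candidate. By Lemma \ref{lemma_GoodsFSomeGoodsEF} the consensus value $\GEF$ computed on that node equals some good agent's $\ESTF$, so $\GEF \leq \EFM$; if more than $4\EFM + 2$ good persisters joined the group candidate, its size would reach $4\EFM + 4 \geq 4\GEF + 4$, and Lemma \ref{lemma_ReliableGroupEstimateFAndGID} would force a reliable group to be created in phase $x$, contradicting the hypothesis that no reliable group exists through phase $f'$.

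Finally I would combine these pieces into $|A| \leq (4\EFM + 2) f'$ by charging each $a_i \in A$ to a unique witness and summing. The main obstacle I expect is precisely this last charging: a naive sum over all phase--Byzantine pairs gives only the looser $(4\EFM + 2)(f')^2$. To obtain the sharper bound I would partition $A$ by the ultimate target in phase $f'+1$, writing $|A| = \sum_{j=1}^{f'} |G_j|$ with $G_j = \{a_i \in A : a_i.\TAR = Byz_j.ID\}$, and argue by induction on $j$ that $|G_j| \leq 4\EFM + 2$, using the disjoint decomposition $G_j = P_{f'}^j \cup N_{f'+1}^j$ (persisters with $Byz_j$ carried over from phase $f'$, and new transitioners that just blacklisted some smaller $Byz_{j'}$); controlling $|N_{f'+1}^j|$ requires tracing each transitioner back through its blacklisting history to an earlier persistence with some $Byz_{j'}$ for $j' < j$, which is where the argument becomes delicate.
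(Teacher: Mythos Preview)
Your ingredients are sound and match what the paper uses: monotonicity of $\TAR$, the fact that $a_{min}.ID$ is never blacklisted (Lemma~\ref{lemma_BlackList}), and the bound that at most $4\EFM+2$ good agents can persist with a fixed Byzantine in a fixed phase without triggering a reliable group. The gap is exactly where you flag it: the final charging step. Your proposed induction on $j$ asserting $|G_j|\le 4\EFM+2$ is false in general. Agents that persisted with some $Byz_{j'}$ ($j'<j$) in an earlier phase and were later blacklisted can all flow into $G_j$ by phase $f'+1$; nothing prevents several such streams from accumulating at a single $Byz_j$, so $|G_j|$ can exceed $4\EFM+2$. Tracing each transitioner ``back through its blacklisting history'' yields only a bound on the \emph{sum} $\sum_j|G_j|$, not on each summand, and as you note the naive summation over all (Byzantine, phase) pairs gives $(4\EFM+2)(f')^2$.

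The paper avoids this by changing the induction parameter from the Byzantine index $j$ to the phase index $x$, and by bounding an \emph{aggregate} count rather than a per-target count. Concretely, it proves by induction on $x$ that in phase $x+1$ of $a_{last}$, at most $(4\EFM+2)x$ good agents have $\TAR\in\{Byz_1.ID,\ldots,Byz_x.ID\}$. For the step, the set $A_x$ of agents targeting $\{Byz_1,\ldots,Byz_{x+1}\}$ in phase $x+2$ is contained in $B_x\cup C_x$, where $C_x$ is the set targeting $\{Byz_1,\ldots,Byz_x\}$ in phase $x+1$ (size $\le(4\EFM+2)x$ by hypothesis) and $B_x$ is the set targeting $Byz_{x+1}$ in phase $x+1$. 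The intersection $B_x\cap A_x$ consists precisely of agents that persisted with $Byz_{x+1}$ in phase $x+1$, hence has size $\le 4\EFM+2$ by your own persistence argument. Summing gives $(4\EFM+2)(x+1)$. Taking $x=f'$ yields the lemma. The point is that aggregating over all Byzantines at each phase absorbs the transitioners automatically, whereas your per-target partition at the final phase forces you to track where each transitioner came from.
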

\begin{proof}
Assume that a reliable group does not exist when $a_{last}$ finishes the $f'$-th phase of the \MGST stage.
Under this assumption, we prove by induction that, in the $(x+1)$-th phase of the \MGST stage ($1 \leq x \leq f'$) of $a_{last}$, at most $(4\EFM +2)x$ good agents assign $bid\in\{Byz_1.ID,Byz_2.ID,\ldots,Byz_{x}.ID\}$ to their variable $\TAR$.
Hereinafter, the $x$-th phase of the \MGST stage of $a_{last}$ is simply called the $x$-th phase.

For the base case, we consider the case of $x=1$.
Let $A_1$ be a set of good agents that assign $Byz_1.ID$ to their variable $\TAR$ in the second phase.
For contradiction, assume $|A_1|>4\EFM +2$.
Since good agents monotonically increase $\TAR$, agents in $A_1$ also assign $Byz_1.ID$ to $\TAR$ in the first phase.
Also, since the agents do not regard $Byz_1$ as a Byzantine agent in the first phase, they find $Byz_1$ in the first phase and, after that, $Byz_1$ does not move and $Byz_1.\TAR=Byz_1.ID$ holds until the $2X_N$-th round of the first phase.
In addition, they start the first phase within at most $X_N$ round and wait during the $(2X_N+1)$-th round to the $(3X_N+1)$-th round of the first phase.
This implies that all agents in $A_1$ exist on the same node as $Byz_1$ before the $2X_N$-th round of the first phase, and at that time the number of agents on the node is at least $4\EFM +4$.
This contradicts the assumption since a reliable group is created by the algorithm.
Therefore, $|A_1|\leq 4\EFM +2$ holds.

For induction step, assume that, in the $(x+1)$-th phase ($1\leq x < f'$), at most $(4\EFM +2)x$ good agents assign $bid\in\{Byz_1.ID,Byz_2.ID,$ $\ldots,Byz_x.ID\}$ to their $\TAR$.
Let $A_x$ be a set of good agents that assign $bid\in\{Byz_1.ID,Byz_2.ID,\ldots,Byz_{x+1}.ID\}$ to $\TAR$ in the $(x+2)$-th phase.
For contradiction, assume $|A_x|>(4\EFM +2)(x+1)$.
Let $B_x$ be a set of good agents that assign $Byz_{x+1}.ID$ to $\TAR$ in the $(x+1)$-th phase, and let $C_x$ be a set of good agents that assign $bid\in\{Byz_1.ID,Byz_2.ID,\ldots,Byz_x.ID\}$ to $\TAR$ in the $(x+1)$-th phase.
Since good agents monotonically increase $\TAR$, $A_x\subseteq B_x\cup C_x$ holds.
Since $|C_x|\leq (4\EFM +2)x$ holds by the assumption of induction, $|B_x \cap A_x| \geq |A_x|-|C_x|>4\EFM +2$ holds.
Since good agents in $B_x \cap A_x$ do not regard $Byz_{x+1}$ as a Byzantine agent in the $(x+1)$-th phase, they find $Byz_{x+1}$, and, after that, $Byz_{x+1}$ does not move and $Byz_{x+1}.\TAR =Byz_{x+1}.ID$ holds until the $2X_N$-th round of the $(x+1)$-th phase.
Similarly to the base case, this implies that all agents in $B_x \cap A_x$ exist on the same node as $Byz_{x+1}$, and at that time, the number of agents on the node is at least $4\EFM +4$.
This contradicts the assumption since a reliable group is created by the algorithm.
Therefore, $|A_x|\leq (4\EFM +2)(x+1)$ holds.

Hence, the lemma holds.
\end{proof}

\begin{lemma}
\label{lemma_NumMTPhase}
Before $a_{last}$ finishes the $(f+1)$-th phase of the \MGST stage, a reliable group is created.
\end{lemma}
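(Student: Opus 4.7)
The plan is to proceed by contradiction: suppose no reliable group is created by the end of $a_{last}$'s $(f+1)$-th phase of the \MGST stage, and let $f'\leq f$ denote the number of Byzantine agents with IDs smaller than $a_{min}.ID$. Since our hypothesis in particular rules out a reliable group at the end of the $f'$-th phase, Lemma~\ref{lemma_GoodBelongToByzTarget} applies and bounds the number of good agents that assign a Byzantine ID to their $\TAR$ during the $(f'+1)$-th phase by $(4\EFM+2)f'$.

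I would then show that every other good search agent must target $a_{min}.ID$ in that phase. By Lemma~\ref{lemma_BlackList} the blacklist of any good agent contains only Byzantine IDs, and by Lemma~\ref{lemma_AllGoodsKnowAllGoodsIDs} $a_{min}.ID$ lies in every good agent's ID list, so $\min(a_i.\IL\setminus a_i.\BL)$ is either a Byzantine ID smaller than $a_{min}.ID$ or $a_{min}.ID$ itself. Combining Lemma~\ref{lemma_EstimateF_1} (which yields at least $(4\EFM+4)(\EFM+1)-f$ good agents), Lemma~\ref{lemma_NumTargetAgent} (at most $\EFM+1$ good target agents), and the inequalities $f'\leq f\leq \EFM$, the number of good search agents targeting $a_{min}.ID$ in the $(f'+1)$-th phase is at least $(4\EFM+3)(\EFM+1-f)\geq 4\EFM+3$.

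The final step is to show that these $4\EFM+3$ search agents together with $a_{min}$ meet on $a_{min}$'s node during a common round of $a_{last}$'s $(f'+1)$-th phase and thereby trigger $\textit{consensus}()$. Since $a_{min}$ is a target agent (Lemma~\ref{lemma_NumTargetAgent}) that waits on one node for the whole phase, Observation~\ref{observation_meet} guarantees that each of the $4\EFM+3$ search agents meets $a_{min}$ during its exploration and then stays on that node running $\textit{consensus}()$; the $2X_N+1$-round phase overlap between $a_{last}$ and any other good agent then supplies a common round at which all $4\EFM+4$ agents are co-located on $a_{min}$'s node while still in the \MGST stage. At this round every precondition of $\textit{consensus}()$ is satisfied: Lemma~\ref{lemma_GoodsFSomeGoodsEF} gives $a_i.\GEF\leq\EFM$, so the thresholds $4\cdot a_i.\ESTF$ and $4\cdot a_i.\GEF+4$ are both at most $4\EFM+4$, and $a_{min}$ plays the role of $a_{target}$. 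A reliable group is therefore formed in the $(f'+1)$-th phase, which is no later than the $(f+1)$-th, contradicting the hypothesis. I expect the hardest part to be this round-level timing step: certifying that a single round within the overlap window contains all $4\EFM+4$ agents simultaneously on $a_{min}$'s node and still executing the \MGST stage requires carefully unpacking the wait-explore-wait schedule of Algorithm~\ref{MGST_code}.
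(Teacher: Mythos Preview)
Your proposal is correct and follows essentially the same route as the paper: apply Lemma~\ref{lemma_GoodBelongToByzTarget} at phase $f'+1$, subtract the at most $\EFM+1$ good target agents (Lemma~\ref{lemma_NumTargetAgent}) and the at most $(4\EFM+2)f'$ good search agents still chasing Byzantine IDs, and conclude that at least $4\EFM+3$ good search agents converge on $a_{min}$, triggering \textit{consensus}(). Your intermediate bound $(4\EFM+3)(\EFM+1-f)$ is a slightly tighter way of organizing the same arithmetic as the paper's direct computation $(4\EFM+4)(\EFM+1)-\EFM-(\EFM+1)-(4\EFM+2)\EFM=4\EFM+3$, and your explicit discussion of the timing overlap is more detailed than the paper's one-line assertion that the agents ``stay with target agent $a_{min}$ before the $2X_N$-th round.''
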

\begin{proof}
Let $f'(\leq f)$  be the number of Byzantine agents whose IDs are smaller than $a_{min}.ID$.
By Lemma \ref{lemma_GoodBelongToByzTarget}, if a reliable group is not created before $a_{last}$ finishes the $f'$-th phase of the \MGST stage, at most $(4\EFM +2)f'$ good agents assign an ID of a Byzantine agent with a smaller ID than $a_{min}$ to $\TAR$ in the $(f'+1)$-th phase of $a_{last}$.
Also, by Lemma \ref{lemma_NumTargetAgent}, the number of good target agents is at most $\EFM +1$.
This implies that, in the $(f'+1)$-th phase of $a_{last}$, at least $(k-f)-(\EFM+1)-(4\EFM+2)f'$ good search agents assign $a_{min}.ID$ to $\TAR$ (because $a_{min}.ID$ is not in variable $\BL$ of agents by Lemma \ref{lemma_BlackList}).
Since they can successfully find $a_{min}$, by Lemma \ref{lemma_EstimateF_1}, at least $(k-f)-(\EFM +1)-(4\EFM +2)f'\geq (4\EFM +4)(\EFM +1)-\EFM-(\EFM+1)-(4\EFM+2)\EFM = 4\EFM +3$ search agents stay with target agent $a_{min}$ before the $2X_N$-th rounds of the $(f'+1)$-th phase of $a_{last}$.
This implies that they make a reliable group.
Hence, the lemma holds.
\end{proof}

The following two lemmas show that agents can achieve the gathering if at least one reliable group is created and they finish the \CIST stage.
Let $a_{ini}$ be the good agent that wakes up earliest.
Since all agents wake up within $X_N$ rounds, if $a_{ini}$ starts two consecutive phases of the \GST stage in round $r$, all good agents start two consecutive phases of the \GST stage before round $r+X_N$.

\begin{lemma}
\label{lemma_ReliableGIDFunc}
Consider the following situation: 
(1) $a_{ini}$ starts two consecutive phases of the \GST stage in round $r$, 
(2) $a_i$ (possibly $a_{ini}$) starts two consecutive phases of the \GST stage in round $r'$ such that $r\leq r' \leq r+X_N$ holds, and 
(3) $a_i$ has completed the \CIST stage before round $r'$.
Let $List_i$ be the output of $\textit{ReliableGID()}$ for $a_i$ in the two consecutive phases, and let $Rel$ be a set of reliable groups that exist in round $r+X_N$.
Then, $List_i$ is a set of all group IDs of $Rel$.
\end{lemma}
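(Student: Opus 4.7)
\noindent\emph{Proof plan.}
The plan is to prove both set-inclusions $List_i\subseteq\{gid:G\in Rel\}$ and $List_i\supseteq\{gid:G\in Rel\}$ via a timing argument that freezes every good agent's $\GID$ variable during $a_i$'s information-collection window. First I would argue that since $a_{ini}$ wakes earliest and all good agents wake within $X_N$ rounds, the iteration of the while loop containing $a_{ini}$'s two GST phases (starting at round $r$) starts at some round in $[r,r+X_N]$ for every good agent, so $a_i$'s collection window---the middle $X_N$ rounds of its first GST phase---lies inside $[r+X_N,\,r+3X_N-1]$. Inspecting Algorithm~\ref{Consensus_MGST_code}, once a good agent's $\GID$ is set to a non-$NULL$ value it is never overwritten; moreover on this iteration every good agent's preceding MGST phase ended by round $r+X_N-1$ and its next MGST phase starts at round $\ge r+2P_N>r+3X_N$. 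This yields that during the collection window every good agent's $\GID$ is constant and reflects a reliable group whose creation round is $\le r+X_N$.

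Next I would prove soundness. If $gid\in List_i$, then at least $a_i.\ESTF+1$ distinct agent IDs are associated with $gid$ in $a_i.\GL$; since at most $f$ of those contributions come from Byzantine agents and $a_i.\ESTF\ge f$ by Lemma~\ref{lemma_EstimateF_1}, some good agent $a_j$ meets $a_i$ during the collection window with $a_j.\GID=gid$. By the freeze argument, $a_j.\GID=gid$ was set no later than round $r+X_N-1$, so a reliable group with group ID $gid$ lies in $Rel$ (using also that the group ID of a reliable group uniquely identifies it, since it equals the target's ID).

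For completeness I would fix $G\in Rel$ with group ID $gid$ and consensus $\GEF_G$. By Lemma~\ref{lemma_ReliableGroupEstimateFAndGID}, every good member of $G$ already has $\GID=gid$ by round $r+X_N$. The exploring and waiting subgroups of $G$ each contain at least $2\GEF_G+2$ agents, hence at least $2\GEF_G+2-f$ good ones. Using Lemmas~\ref{lemma_EstimateF_2} and~\ref{lemma_GoodsFSomeGoodsEF}, $\GEF_G\ge\max(a_i.\ESTF-1,f)$, from which a two-case check ($a_i.\ESTF=f$ versus $a_i.\ESTF\ge f+1$) gives $2\GEF_G+2-f\ge a_i.\ESTF+1$. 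An overlap analysis in the style of Observation~\ref{observation_meet} then shows $a_i$ meets all of these good members during the collection window: if $a_i$ is in state $\SGWG$, it waits on a single node for the whole first GST phase and every good exploring member of $G$ completes its $X_N$-round traversal strictly inside that waiting interval, hence visits $a_i$'s node; otherwise $a_i$ itself explores and its traversal visits the node on which good waiting members of $G$ sit throughout the collection window. Either way $a_i$ records at least $a_i.\ESTF+1$ distinct carriers of $gid$, so $gid\in\textit{ReliableGID}()=List_i$.

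The hardest part will be synchronising these freeze and meeting guarantees: the choice $P_N=3X_N+1$ combined with placing explorations in the middle $X_N$ rounds of a phase is tuned precisely so that within the $X_N$-round wake-up slack an exploring agent's traversal is always sandwiched inside any overlapping waiting interval \emph{and} no fresh MGST update of $\GID$ can leak into the collection window; reconciling $\GEF_G$ with $a_i.\ESTF$ (which may differ by one) is a secondary nuisance handled by the cited lemmas.
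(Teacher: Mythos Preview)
Your proposal is correct and follows essentially the same route as the paper: a freeze argument showing no new $\GID$ is set during the two \GST\ phases (since every good agent's preceding \MGST\ phase ends by round $r+X_N$), a case split on whether $a_i$ is in a waiting group or not to establish that $a_i$ meets enough good members of each group in $Rel$, a counting argument showing each exploring/waiting subgroup has at least $a_i.\ESTF+1$ good members, and soundness from the fact that $a_i.\ESTF+1>f$ rules out purely Byzantine group IDs. The only cosmetic differences are that the paper derives $2\cdot\GEF_G+2-f\geq a_i.\ESTF+1$ in one step via $\GEF_G\geq f$ and $|\GEF_G-a_i.\ESTF|\le 1$ rather than your two-case split, and the paper does not explicitly name a single ``collection window'' (your description of it as the middle $X_N$ rounds applies only to the non-waiting case; in the waiting case $a_i$ collects throughout the first phase, but this does not affect correctness since any good $\GID$ seen still corresponds to a group in $Rel$).
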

\begin{proof}
By the algorithm, since all good agents wake up within $X_N$ rounds, all good agents start two consecutive phases of the \GST stage during rounds $r$ to $r+X_N$ and hence, no new reliable group is created during rounds $r+X_N$ to $r+2X_N$.

If $a_i$ belongs to a reliable waiting group, it waits during rounds $r'(\leq r+X_N)$ to $r'+3X_N (\geq r+3X_N)$.
Since all good agents in reliable exploring groups explore the network during rounds $r+X_N$ to $r+3X_N$, all of them meet $a_i$.
Therefore, for each good agent $a$ in a reliable exploring group of $Rel$, $a_i.\GL$ contains $(a.GID,a.ID)$.

If $a_i$ does not belong to a reliable waiting group, it explores the network during rounds $r'+X_N (\geq r+X_N)$ to $r'+2X_N(\leq r+3X_N)$.
Since all good agents in reliable waiting groups wait during rounds $r+X_N$ to $r+3X_N$, all of them meet $a_i$.
Therefore, for each good agent $a$ in a reliable waiting group of $Rel$, $a_i.\GL$ contains $(a.GID,a.ID)$.

Let $a_k$ be a good agent that belongs to a group in $Rel$.
By Lemma \ref{lemma_ReliableGroupEstimateFAndGID}, the reliable group of $a_k$ contains at least $4\cdot a_k.\GEF +4-f$ good agents, and hence, each of the exploring group and the waiting group contains at least $2\cdot a_k.\GEF +2-f\geq a_k.\GEF+2$ good agents.
By Lemmas \ref{lemma_EstimateF_2} and \ref{lemma_GoodsFSomeGoodsEF},  since $a_k.\GEF+2\geq a_i.\ESTF +1$ holds from $|a_k.\GEF-a_i.\ESTF|\leq 1$, $a_i.\GL$ contains at least $a_i.\ESTF+1$ pairs for each group in $Rel$.
Hence, $List_i$ contains all group IDs of $Rel$.
In addition, since there exist $f$ Byzantine agents, $List_i$ does not contain a fake group ID that was conveyed by Byzantine agents.
Hence, $List_i$ is a set of all group IDs of $Rel$.

\end{proof}

\begin{lemma}
\label{lemma_ReliableGroupCompleteTask}
Let $r$ be the first round such that (a) $a_{ini}$ starts two consecutive phases of the \GST stage in round $r$ and (b) there exists a reliable group in round $r+X_N$.
Let $Rel$ be a set of reliable groups that exist in round $r+X_N$.
Let $G_{min}$ be the group with the smallest group ID $gid_{min}$ in $Rel$.
Let $v_{min}$ be the node where $G_{min}$ is created.
Assume that $a_i$ (possibly $a_{ini}$) starts two consecutive phases of the \GST stage in round $r'$ such that $r\le r'\le r+X_N$.
Then, the following propositions hold:
(1) If $a_i$ has finished the \CIST stage before round $r'$, it terminates the algorithm on $v_{min}$ during the two consecutive phases of the \GST stage after round $r'$.
(2) If $a_i$ has not finished the \CIST stage in round $r'$, it terminates the algorithm on $v_{min}$ in the first two consecutive phases of the \GST stage after it finishes the \CIST stage.
\end{lemma}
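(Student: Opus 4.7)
The plan is to handle the two propositions separately, reducing proposition~(1) directly to Lemma~\ref{lemma_ReliableGIDFunc} together with a small amount of timing bookkeeping, and then recycling the same style of argument for proposition~(2), where the key new observation is that the waiting group of $G_{min}$ stays on $v_{min}$ forever, even after its members have terminated.

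For proposition~(1), because $a_i$ has finished the \CIST stage before $r'$ and $r\le r'\le r+X_N$, Lemma~\ref{lemma_ReliableGIDFunc} applies directly: the value of $\textit{ReliableGID}()$ that $a_i$ computes at the start of its second GST phase is exactly the set of group IDs of $Rel$, so its minimum equals $gid_{min}$. I then split into two sub-cases. If $a_i.\STA=\SGWG$ and $a_i.\GID=gid_{min}$, then $a_i$ has been sitting on $v_{min}$ ever since its reliable group was declared, hits the branch on lines~21--23 of Algorithm~\ref{GST_code}, waits $3X_N$ more rounds, and terminates on $v_{min}$. Otherwise $a_i$ enters the search branch on lines~24--28; I verify that $a_i$'s exploration window $[r'+4X_N+1,\,r'+5X_N]$ lies inside $[r_j'+3X_N+1,\,r_j'+6X_N+1]$, the interval during which any good member $a_j$ of $G_{min}$'s waiting group stays on $v_{min}$ (using $r,\,r',\,r_j'\in[r,r+X_N]$), so $a_i$ encounters that waiting group on $v_{min}$, stops there, and terminates at the last round of the phase.

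For proposition~(2), while $a_i.EndCI=False$ the \GST code is the pure wait on lines~1--2 of Algorithm~\ref{GST_code}, which neither moves $a_i$ nor touches any of its variables, so the interleaved \GST pairs executed before $a_i$ finishes \CIST do not disturb the rest of its computation. Let $r_g$ be the first round at which $a_i$ starts a \GST pair with $a_i.EndCI=True$; since $a_i.\STA=\SCI$ at that moment, the first phase takes the else branch and $a_i$ explores by \textit{EXPLO}$(N)$. The key point is that good members of $G_{min}$'s waiting group remain on $v_{min}$ indefinitely, possibly already terminated but with their state (including $\GID=gid_{min}$) still readable by the model, and combining Lemmas~\ref{lemma_EstimateF_2}, \ref{lemma_GoodsFSomeGoodsEF} and \ref{lemma_ReliableGroupEstimateFAndGID} gives at least $a_i.\ESTF+1$ such good members. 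Hence, as soon as $a_i$'s exploration visits $v_{min}$, $a_i.\GL$ accumulates at least $a_i.\ESTF+1$ distinct pairs $(gid_{min},\cdot)$, placing $gid_{min}\in\textit{ReliableGID}()_i$; assuming the minimality claim discussed below, $a_i$ then enters the search branch in its second GST phase, and since the waiting group of $G_{min}$ is still on $v_{min}$, the \textit{EXPLO}$(N)$ of that phase reaches $v_{min}$ and $a_i$ terminates there at the last round.

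The main obstacle is the minimality claim needed in~(2): I must rule out that a reliable group with group ID strictly smaller than $gid_{min}$ is ever created after round $r+X_N$. The plan is to combine Lemma~\ref{lemma_BlackList}, which forces the $\TAR$ of every good search agent to be at most $a_{min}.ID$, with the definition of $r$ as the \emph{first} round satisfying both (a) and (b), and with a phase-by-phase $\BL$-accounting in the spirit of Lemma~\ref{lemma_GoodBelongToByzTarget}: any putative smaller-ID reliable group would have to be centred on a Byzantine with ID below $gid_{min}$, and the accounting forces such a group to have been fully formed already in $Rel$, contradicting the minimality of $gid_{min}$.
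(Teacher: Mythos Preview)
Your treatment of proposition~(1) is essentially the paper's: invoke Lemma~\ref{lemma_ReliableGIDFunc} to get $\min(\textit{ReliableGID}())=gid_{min}$, then split on whether $a_i$ is in the waiting group of $G_{min}$ and check the timing of the second-phase exploration against the waiting group's $3X_N$-round wait. That part is fine.

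For proposition~(2) you are working much harder than necessary, and the hard part you isolate (the minimality claim) has a gap. The paper's argument is simply that \emph{no new reliable group is created after round $r+X_N$}. The reason is structural, not combinatorial: every good agent that has finished the \CIST stage by round $r'$ terminates during the very \GST pair starting at $r'$ (this is proposition~(1)), and every good agent that has not yet finished the \CIST stage never executes a \MGST phase until after it does. Arguing inductively over the order in which the remaining agents finish \CIST, each such agent enters its first active \GST pair with the set of reliable groups still equal to $Rel$, so $\min(\textit{ReliableGID}())=gid_{min}$ immediately, it terminates on $v_{min}$, and the invariant persists for the next agent. No $\BL$-accounting is needed.

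Your proposed route through a Lemma~\ref{lemma_GoodBelongToByzTarget}-style argument is not only heavier but also incomplete as stated: you assert that any later reliable group with $gid<gid_{min}$ ``would have to be centred on a Byzantine with ID below $gid_{min}$'', but nothing you cite rules out a later group centred on a \emph{good} target, for instance $a_{min}$ itself, whose ID could well be smaller than $gid_{min}$ if the $a_{min}$-centred group had not yet formed by round $r+X_N$. The paper sidesteps this entirely by showing that no good agent executes \MGST after the critical \GST pair, so no group of any kind---Byzantine-centred or good-centred---can be created.
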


\begin{proof}
First, we prove proposition (1).
We focus on the first two consecutive phases of the \GST stage after round $r'$.
From Lemma \ref{lemma_ReliableGIDFunc}, $a_i$ obtains the set of all group IDs of $Rel$ as the output of $\textit{ReliableGID()}$ and hence, $\min(\textit{ReliableGID()})$ is $gid_{min}$.
Hence, if $a_i$ belongs to a reliable waiting group of $G_{min}$, it terminates on its current node $v_{min}$ at the $3X_N+1$ round of the second phase after round $r'$.
Otherwise, $a_i$ searches for the waiting group of $G_{min}$ in the second phase after round $r'$.
More concretely, $a_i$ explores the network during the $(X_N+1)$-th round to the $2X_N$-th round in the second phase.
Recall that agents in a reliable waiting group of $G_{min}$ wait $3X_N$ rounds before terminating on $v_{min}$ in their second phases, and the difference of starting times of the phases is at most $X_N$.
Hence, $a_i$ meets agents in a reliable waiting group of $G_{min}$ on $v_{min}$ during the exploration, and then, it terminates on $v_{min}$.

Next, we prove proposition (2).
Consider the case that $a_i$ is the first agent that finishes the \CIST stage after $r'$.
Assume that, in round $r''$, $a_i$ finishes the \CIST stage.
Since no agent executes the \MGST stage between $r'$ and $r$, the set of reliable groups is $Rel$.
Since all agents that belong to groups in $Rel$ have terminated from proposition (1), $a_i$ meets all of them in the first phase of the \GST stage after round $r''$.
Hence, in the second phase, $gid_{min}
=\min(\textit{ReliableGID()})$ holds, and consequently $a_i$ terminates the algorithm on $v_{min}$ during the second phase.
Consider the case that $a_i$ is not the first agent that finishes the \CIST stage after $r'$.
Even in this case, the set of reliable groups is still $Rel$.
Hence, we can prove this case similarly to the above case.

\end{proof}

Finally, we prove the complexity of the proposed algorithm.

\begin{theorem}
\label{theorem_NotDetectGathering}
Let $n$ be the number of nodes, $k$ be the number of agents, $f$ be the number of weakly Byzantine agents, and $\Lambda_{good}$ be the largest ID among good agents.
If the upper bound $N$ of $n$ is given to agents and $(4f+4)(f+1)\leq k$ holds, the proposed algorithm solves the gathering problem with non-simultaneous termination in at most $X_N+3(2\lfloor\log\Lambda_{good}\rfloor +f+7)(3X_N+1)$ rounds.
\end{theorem}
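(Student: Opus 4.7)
The plan is to decompose the total running time of the algorithm into the initial wake-up exploration plus a bounded number of outer-loop iterations of Algorithm~\ref{algo_entire}, and then to bound each contribution using the lemmas already established. Since $a_{ini}$ begins by running \textit{EXPLO}$(N)$, every good agent is awake by round $X_N$, so any two good agents' phase boundaries differ by at most $X_N$ rounds; this guarantees the inter-phase overlap captured by Observation~\ref{observation_meet}, on which the \CIST, \MGST, and \GST stages all rely.

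Letting $a_{last}$ denote the good agent that finishes its \CIST stage last, Lemma~\ref{lemma_AllGoodsKnowAllGoodsIDs}, together with the definition of the extended label, implies that $a_{last}$ uses at most $2\lfloor\log a_{last}.ID\rfloor + 6 \le 2\lfloor\log \Lambda_{good}\rfloor + 6$ \CIST phases. From that moment on, Lemma~\ref{lemma_NumMTPhase} guarantees that a reliable group is created by the end of $a_{last}$'s $(f+1)$-th \MGST phase, so at most $f+1$ further \MGST phases are needed. Finally, Lemma~\ref{lemma_ReliableGroupCompleteTask} ensures that every good agent, including any that were still in \CIST when the reliable group appeared, terminates on the node of the smallest-group-ID reliable group within the two consecutive \GST phases that immediately follow. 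Because each outer iteration of the main loop is one \CIST-or-\MGST phase followed by two \GST phases, the algorithm executes at most $(2\lfloor\log \Lambda_{good}\rfloor + 6) + (f+1) = 2\lfloor\log \Lambda_{good}\rfloor + f + 7$ outer iterations, each of duration $3P_N = 3(3X_N+1)$ rounds; adding the $X_N$ rounds of initial exploration will yield the claimed bound.

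The main obstacle will be the careful bookkeeping that glues the three lemmas together. I will need to verify that the \GST phases interleaved between \CIST and \MGST phases neither update the variables those stages rely on nor delay their progress---the algorithm is specifically designed so that \GST is a no-op on \CIST/\MGST state before any reliable group exists---and that the up to $X_N$-round desynchronisation between different good agents is absorbed into the phase length $P_N = 3X_N + 1$ (this is exactly the slack that Observation~\ref{observation_meet} and the proof of Lemma~\ref{lemma_ReliableGroupCompleteTask} already exploit, via their use of $a_{ini}$ and $a_{last}$ as reference agents). Once those details are in place, summing the three contributions---$X_N$ rounds for the initial exploration, $3(2\lfloor\log \Lambda_{good}\rfloor+6)P_N$ rounds for the \CIST iterations of $a_{last}$, and $3(f+1)P_N$ rounds for the \MGST iterations plus the final \GST pair---gives exactly $X_N + 3(2\lfloor\log \Lambda_{good}\rfloor + f + 7)(3X_N+1)$ rounds.
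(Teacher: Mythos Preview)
Your proposal is correct and follows essentially the same approach as the paper's own proof: you bound the time to finish the \CIST stage for $a_{last}$ by $2\lfloor\log\Lambda_{good}\rfloor+6$ outer iterations, invoke Lemma~\ref{lemma_NumMTPhase} to cap the \MGST contribution at $f+1$ further iterations, and appeal to Lemma~\ref{lemma_ReliableGroupCompleteTask} to conclude that the final \GST pair (already included in the last outer iteration) completes the gathering. The paper's proof is slightly terser but uses exactly the same decomposition, the same two lemmas, and the same arithmetic $X_N + 3(2\lfloor\log\Lambda_{good}\rfloor+6)(3X_N+1) + 3(f+1)(3X_N+1)$.
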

\begin{proof}
Let $a_{last}$ be the good agent that finishes the \CIST stage last.
Since $a_{last}$ wakes up within $X_N$ rounds (after the first agent wakes up) and executes at most $2\lfloor\log \Lambda_{good} \rfloor +6$ phases of the \CIST stage, $a_{last}$ finishes the \CIST stage in $X_N+(2\lfloor\log \Lambda_{good} \rfloor +6)\cdot 3(3X_N +1)=X_N+3(2\lfloor\log \Lambda_{good} \rfloor +6)(3X_N +1)$ rounds.
By Lemma \ref{lemma_NumMTPhase}, a reliable group is created before $a_{last}$ finishes the $(f+1)$-th phase of the \MGST stage.
By Lemma \ref{lemma_ReliableGroupCompleteTask}, if at least one reliable group is created and all good agents finish the \CIST stage, agents achieve the gathering during the next two phases of the \GST stage.
Therefore, agents achieve the gathering in at most $X_N+3(2\lfloor\log\Lambda_{good}\rfloor +6)(3X_N +1)+(f+1)\cdot 3(3X_N +1)=X_N+3(2\lfloor\log\Lambda_{good}\rfloor +f+7)(3X_N +1)$ rounds.
\end{proof}

\section{A gathering algorithm with simultaneous termination}
\label{sec:detect}
In this section, we propose an algorithm for the gathering problem \emph{with simultaneous termination} by modifying the algorithm in the previous section.
The underlying assumption is the same as that of the previous section.
In the following, we refer to the proposed algorithm in the previous section as the previous algorithm.
By the previous algorithm, all good agents gather on a single node but terminate at different rounds.
Therefore, the purpose of this section is to change the termination condition of the previous algorithm so that all good agents terminate at the same round.

By Lemma \ref{lemma_ReliableGroupCompleteTask}, after all good agents finish the \CIST stage and at least one reliable group is created, all good agents gather at a single node during the next two consecutive phases of the \GST stage.
Hence, after good agents move to the gathering node in the \GST stage, they can terminate at the same round if they wait until all good agents finish the \CIST stage (and the next \GST stage).
To do this, we can use the fact that, when good agent $a_i$ finishes the \CIST stage, $a_i.\IL$ contains IDs of all good agents.
That is, $\max(a_i.\IL)$ is the upper bound of IDs of good agents and hence, $a_i$ can compute the upper bound of rounds required for all good agents to finish the \CIST stage.
However, for two good agents $a_i$ and $a_j$, $\max(a_i.\IL)$ can be different from $\max(a_j.\IL)$ because it is possible that either $a_i$ or $a_j$ meets a Byzantine agent with an ID larger than the largest ID among good agents.
Also, if agents share their variable $\IL$ and take the maximum ID, Byzantine agents may share a very large ID such that no agent has the ID.
To overcome this problem, each agent $a_i$ selects the largest ID among IDs that $a_i.\GEF+1$ agents have in their variable $\IL$, and computes when to terminate.
Note that, in order that all good agents agree on the largest ID, they should have the same value of $\GEF$.
For this reason, each agent $a_i$ updates $a_i.\GEF$ similarly to the \MGST stage after it completes the previous algorithm.
Since all good agents in a reliable group exist on the gathering node, $a_i$ can correctly update $a_i.\GEF$.

Lastly, to terminate at the same round, good agents make a consensus on termination. 
To do this, each agent $a_i$ prepares a flag $a_i.flag_t$ (initially, $a_i.flag_t\gets False$). 
Agent $a_i$ executes $a_i.flag_t\gets True$ if it is ready to terminate, i.e., it understands that all good agents gather on the current node. 
After $a_i$ completes the previous algorithm, it also checks $flag_t$ of all agents on the current node every round.
If $flag_t$ of at least $a_i.\GEF+1$ agents are true, $a_i$ terminates the algorithm because at least one good agent understands that all good agents gather on the current node.
Since all good agents stay at the same node and make the decision based on the same information, they can terminate at the same round.

In this paragraph, we describe the detailed behavior of $a_i$ in the algorithm.
First, $a_i$ executes the previous algorithm until just before it terminates, but it does not terminate.
Let round $r_i$ be the round immediately after $a_i$ completes the previous algorithm.
After round $r_i$, $a_i$ waits on the gathering node of the previous algorithm, say $v$, and always checks whether it can terminate.
More concretely, $a_i$ executes the following operations every round after round $r_i$.
\begin{enumerate}
\item Agent $a_i$ updates $a_i.\GEF$ in the same way as in the \MGST stage of the previous algorithm, that is, $a_i$ assigns the most frequent value of $\ESTF$ to $a_i.\GEF$. If multiple values are the most frequent, $a_i$ chooses the smallest one.
\item Agent $a_i$ checks $flag_t$ of agents on $v$, and, if $flag_t$ of at least $a_i.\GEF+1$ agents are true, $a_i$ terminates the algorithm.
\item Agent $a_i$ checks variable $\IL$ of agents on $v$ and computes the maximum ID among agents.
That is, letting $\IL_g$ be a set of IDs that at least $a_i.\GEF+1$ agents on $v$ have in their variable $\IL$, $a_i$ executes $a_i.\IDM \gets \max(\IL_g)$.
\item Agent $a_i$ checks whether all good agents gather on $v$.
If all good agents have completed the \CIST stage before round $r_i$, all good agents gather on $v$ before round $r_i+X_N$ because all agents wake up within $X_N$ rounds.
Consider the case that some good agent has not yet completed the \CIST stage in round $r_i$.
Since a reliable group has already been created, if the agent with ID $a_i.\IDM$ has finished the \CIST stage and its next two phases of the \GST stage, $a_i$ understands that all good agents gather on $v$.
Note that the agent with ID $a_i.\IDM$ completes the \CIST stage and its next two phases of the \GST stage in at most $T=X_N+X_N+3(2\lfloor\log (a_i.\IDM)\rfloor+6)(3X_N+1)$ rounds after $a_i$ starts the algorithm.
For this reason, $a_i$ sets $a_i.flag_t\gets True$ if (a) $X_N$ rounds have elapsed after round $r_i$ and (b) $T$ rounds have elapsed after it starts the algorithm.
\end{enumerate}

\begin{theorem}
\label{theorem_DetectGathering}
Let $n$ be the number of nodes, $k$ be the number of agents, $f$ be the number of Byzantine agents, and $\Lambda_{all}$ be the largest ID among all agents.
If the upper bound $N$ of $n$ is given to agents and $(4f+4)(f+1)\leq k$ holds, the proposed algorithm solves the gathering problem with simultaneous termination in at most $3X_N+3(2\lfloor\log\Lambda_{all}\rfloor +f+7)(3X_N +1)+1$ rounds.
\end{theorem}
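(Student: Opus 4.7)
The plan is to reduce Theorem~\ref{theorem_DetectGathering} to Theorem~\ref{theorem_NotDetectGathering} plus a short consensus-based termination argument. First I would invoke Theorem~\ref{theorem_NotDetectGathering} (together with Lemma~\ref{lemma_ReliableGroupCompleteTask}) to conclude that every good agent $a_i$ eventually reaches the gathering node $v$ (where the reliable waiting group with the smallest group ID was created) and finishes executing the previous algorithm; let $r_i$ be the round right after this completion. By Lemma~\ref{lemma_ReliableGroupCompleteTask} and the fact that all agents wake up within $X_N$ rounds, every good agent is on $v$ by round $r_i + X_N$, so for $t \ge r_i + X_N$ the multiset of agents observed on $v$ by $a_i$ contains every good agent.

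Next I would show that after round $r_i + X_N$ all good agents compute the same $\GEF$ and the same $\IDM$, and that this $\IDM$ is at least $\Lambda_{good}$. For $\GEF$, since $v$ holds at least $(4f+4)(f+1) - f \ge 4f+3$ good agents whose $\ESTF$ values differ by at most one (Lemma~\ref{lemma_EstimateF_2}), the most-frequent-with-tie-breaking rule returns the same value for every good agent observing $v$ (this is essentially the argument of Lemma~\ref{lemma_GoodsFSomeGoodsEF} adapted to the stable situation at the gathering node). For $\IDM$, the threshold $\GEF+1 \ge f+1$ prevents the $f$ Byzantine agents from injecting any fake ID into $\IL_g$, while by Lemma~\ref{lemma_AllGoodsKnowAllGoodsIDs} every good agent has $\Lambda_{good}$ in its $\IL$; hence $\Lambda_{good} \in \IL_g$ and $\IDM \ge \Lambda_{good}$. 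Because the observed state of $v$ is identical for every good agent in any given round, all of them compute the \emph{same} $\IDM$, bounded above by $\Lambda_{all}$.

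For simultaneous termination I would then observe that the condition under which $a_i$ sets $a_i.flag_t \gets True$ depends only on $r_i$, on the start round of $a_i$, and on $\IDM$; each of these is either known to be synchronized across good agents after the consensus phase or lies within an $X_N$-window. The choice $T = 2X_N + 3(2\lfloor\log\IDM\rfloor + 6)(3X_N+1)$ is precisely the deadline by which the good agent with the largest ID (or earlier, if $\IDM$ belongs to a Byzantine agent) must have finished its \CIST stage and the subsequent two \GST phases, so by round $\max(r_i + X_N, \text{start}_i + T)$ every good agent is already on $v$. After this moment every good agent sets $flag_t = True$, hence within the next round every good agent simultaneously sees at least $\GEF+1$ true flags on $v$ and terminates in exactly the same round, since all good agents share the same view of $v$ at the beginning of each round.

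Finally, the time bound follows by arithmetic: the previous algorithm terminates within $X_N + 3(2\lfloor\log\Lambda_{good}\rfloor + f + 7)(3X_N+1)$ rounds by Theorem~\ref{theorem_NotDetectGathering}, then each good agent waits at most $X_N$ more rounds for the late-starting ones to arrive, and further until $T$ rounds elapsed from its own start; using $\IDM \le \Lambda_{all}$ and adding one final round for the flag check yields $3X_N + 3(2\lfloor\log\Lambda_{all}\rfloor + f + 7)(3X_N+1) + 1$. The main obstacle I expect is step two: proving that the different values of $\GEF$ and of $\IL$ held by good and Byzantine agents on $v$ cannot desynchronize either the consensus value or the $\IDM$ computation across good agents, and that Byzantine agents cannot prematurely trigger the $\GEF+1$ flag threshold — both of which reduce to the inequality $\GEF + 1 > f$, guaranteed by $\GEF \ge f$.
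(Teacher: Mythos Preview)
Your proposal follows the same high-level structure as the paper's proof, but there is a genuine gap in the first step. The claim ``every good agent is on $v$ by round $r_i + X_N$'' does not follow from Lemma~\ref{lemma_ReliableGroupCompleteTask}: a good agent $a_j$ with ID close to $\Lambda_{good}$ may still be in its \CIST stage long after some $a_i$ with a small ID has completed the previous algorithm, so at round $r_i + X_N$ only \emph{some} good agents are guaranteed to be on $v$. Your step~2 then justifies the correctness of $\GEF$ and $\IDM$ by asserting that all $(4f+4)(f+1)-f$ good agents are on $v$, which is circular: you need $\IDM \ge \Lambda_{good}$ (and hence the deadline $T$) to argue that the late agents have arrived, but you are using their presence to justify the value of $\IDM$.

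The paper breaks this circularity differently. It fixes the first good agent $a_f$ to set $flag_t \gets True$ (at round $r^*$) and observes that already at round $r_f$, when $a_f$ completed the previous algorithm, the good members of the \emph{reliable group} in $Rel$ are on $v_{min}$ (they finished their \CIST stage earlier and do not move during the last $X_N$ rounds). These are at least $4\cdot a_f.\GEF+4-f \ge 3f$ good agents, which is enough for the most-frequent rule to give $a_f.\GEF = a_j.\ESTF$ for some good $a_j$ (the argument of Lemma~\ref{lemma_GoodsFSomeGoodsEF}), and enough to exceed the $\GEF+1$ threshold so that $\IDM \ge \Lambda_{good}$ via Lemma~\ref{lemma_AllGoodsKnowAllGoodsIDs}. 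Only then does condition~(b) of the algorithm guarantee that by round $r^*$ every good agent, including those still in \CIST at round $r_f$, has completed the previous algorithm. Your argument can be repaired by replacing ``$(4f+4)(f+1)-f$ good agents on $v$'' with ``the good agents of the reliable group on $v$'' and verifying that this smaller count still exceeds both thresholds; but as written, step~1 asserts something false and step~2 depends on it.
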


\begin{proof}
Let $a_{ini}$ be the agent that starts the algorithm earliest.
Let $r$ be the first round such that (a) $a_{ini}$ starts two consecutive phases of the \GST stage in round $r$ and (b) there exists a reliable group in round $r+X_N$, and let $Rel$ be a set of reliable groups that exist in round $r+X_N$.
Let $G_{min}$ be the group with the smallest group ID in $Rel$, and let $v_{min}$ be the node where $G_{min}$ is created.
From Lemma \ref{lemma_ReliableGroupCompleteTask}, each good agent exists on $v_{min}$ when it completes the previous algorithm.

Let $a_f$ be the agent that executes $flag_t\gets True$ earliest, and assume that $a_f$ executes $a_f.flag_t\gets True$ in round $r^*$.

First, we prove that all good agents complete the previous algorithm before round $r^*$.
Assume that $a_f$ completes the previous algorithm in round $r_f$.
If all good agents complete the \CIST stage before round $r_f$, all good agents gather on $v$ before round $r_f+X_N$.
Since $r^*\ge r_f+X_N$ holds, all good agents complete the previous algorithm before round $r^*$.
Consider the case that some good agent has not yet completed the \CIST stage in round $r_f$.
Since all agents wake up within $X_N$ rounds and agents do not move during the last $X_N$ rounds of the previous algorithm, good agents in a reliable group in $Rel$ exist on $v_{min}$ after round $r_f$.
Hence, at least $4 \cdot a_f.\GEF+4-f \geq 3f$ good agents exist on $v_{min}$ after round $r_f$.
Hence, similarly to Lemma \ref{lemma_GoodsFSomeGoodsEF}, $a_f$ assigns $\ESTF$ of some good agent to $a_f.\GEF$ after round $r_f$.
This implies that $a_f$ assigns an ID of some agent to $a_f.\IDM$.
Note that the assigned ID is at least $\Lambda_{good}$, where $\Lambda_{good}$ is the largest ID among all good agents.
Hence, since $a_f$ executes $flag_t\gets True$ only when $T$ rounds have elapsed from the beginning, all good agents complete the \CIST stage and the next two consecutive phases of the \GST stage in round $r^*$.
Since a reliable group has already been created, all good agents complete the previous algorithm before round $r^*$.

Next, we prove that all good agents terminate on $v_{min}$ at the same round.
From the above discussion, all good agents wait on $v_{min}$ in round $r^*$.
Since all good agents obtain the same information on $v_{min}$, they decide the same value on $\GEF$.
Hence, they can terminate at the same round immediately after at least $\GEF+1$ agents execute $flag_t\gets True$.

Lastly, we prove that good agents terminate in at most $3X_N +3(2\lfloor\log\Lambda_{all}\rfloor $ $+f+7)$ $(3X_N +1)+1$ rounds. 
Similarly to Theorem \ref{theorem_NotDetectGathering}, all good agents complete the previous algorithm and gather on $v_{min}$ in at most $T_1=X_N+3(2\lfloor\log\Lambda_{good}\rfloor +f+7)(3X_N +1)$ rounds. %
In addition, since $\IDM$ is an ID of some agent, good agents wait until at most $T_2=2X_N+3(2\lfloor\log \Lambda_{all} \rfloor +6)(3X_N +1)$ rounds have passed.
Note that good agents execute $flag_t\gets True$ if (a) $X_N$ rounds have passed after they complete the previous algorithm and (b) $T$ rounds have passed after the beginning of the algorithm.
Hence, good agents execute $flag_t\gets True$ in at most $T_3=max\{T_1+X_N,T_2\}\leq 2X_N+3(2\lfloor\log\Lambda_{all}\rfloor +f+7)(3X_N +1)$ rounds after they start the algorithm.
Since all good agents start the algorithm within $X_N$ rounds and they terminate after at least $\GEF+1$ agents execute $flag_t\gets True$, they terminate in at most $X_N+T_3+1=3X_N+3(2\lfloor\log\Lambda_{all}\rfloor +f+7)(3X_N +1)+1$ rounds after the first good agent wakes up.
\end{proof}

\section{Conclusion}
In this paper, we have developed two algorithms that achieve the gathering in weakly Byzantine environments.
We proposed two algorithms that reduce the time complexity compared to the existing algorithm by assuming a strong team of agents.
The proposed algorithms operate under the assumption that the upper bound $N$ of the number of nodes is given to agents, and at least $(4f+4)(f+1)$ good agents exist in the network, where $f$ is the number of Byzantine agents.
The first algorithm achieves the gathering with non-simultaneous termination in $O((f+|\Lambda_{good}|)\cdot X(N))$ rounds, where $|\Lambda_{good}|$ is the length of the largest ID among good agents and $X(N)$ is the number of rounds required to explore any network composed of at most $N$ nodes.
The second algorithm achieves the gathering with simultaneous termination in $O((f+|\Lambda_{all}|)\cdot X(N))$ rounds, where $|\Lambda_{all}|$ is the length of the largest ID among agents.

As future work, it would be interesting to study the trade-off between the time complexity and the ratio of good and Byzantine agents.

\end{document}